\documentclass[journal,twoside,web]{ieeecolor}
\usepackage{generic}
\usepackage{cite}
\usepackage{amsmath,amssymb,amsfonts,cases,mathrsfs,mathtools}
\usepackage{algorithmic,algorithm}
\usepackage{xcolor,graphicx}
\usepackage{hyperref}
\hypersetup{hidelinks=true}
\usepackage{textcomp}
\def\BibTeX{{\rm B\kern-.05em{\sc i\kern-.025em b}\kern-.08em
    T\kern-.1667em\lower.7ex\hbox{E}\kern-.125emX}}
\usepackage{wrapfig,lipsum}
\usepackage{pgfplots}
\usepackage{tikz, tikzscale,calc}
\usepackage{caption}
\usepackage{subcaption}
\usepackage{enumerate}
\usepackage{booktabs}
\usepackage{array}
\usepackage{verbatim} 
\usepackage{multirow}
\newtheorem{lem}{Lemma}
\newtheorem{prop}{Proposition}
\newtheorem{defin}{Definition}
\newtheorem{ass}{Assumption}
\newtheorem{thm}{Theorem}
\newtheorem{remark}{Remark}

\newenvironment{definition}{\vskip 3pt\begin{defin}}{\vskip 3pt
\end{defin}}

\begin{document}
\title{Stability of Open Multi-agent Systems over Dynamic Signed Digraphs}
\author{Pelin Sekercioglu, Angela Fontan, Dimos V. Dimarogonas
\thanks{This work was supported in part by the Wallenberg AI, Autonomous Systems and Software Program (WASP) funded by the Knut and Alice Wallenberg (KAW) Foundation, the Horizon Europe EIC project SymAware (101070802), the ERC LEAFHOUND Project, the Swedish Research Council (VR), and Digital Futures.}
\thanks{P. Sekercioglu, A. Fontan, and D. V. Dimarogonas are with the Division of Decision and Control Systems, KTH Royal Institute of Technology, SE-100 44 Stockholm, Sweden (e-mail: \{pelinse,angfon,dimos\}@kth.se).}}

\maketitle

\begin{abstract}
We address the synchronization problem in open multi-agent systems (OMAS) containing both cooperative and antagonistic interactions. In these systems, agents can join or leave the network over time, and the interaction structure may evolve accordingly. To capture these dynamical structural changes, we represent the network as a switched system interconnected over a dynamic and directed signed graph. Additionally, the network may contain one or multiple leader groups that influence the behavior of the remaining agents. In general, we show that the OMAS exhibit a more general form of synchronization, including trivial consensus, bipartite consensus and containment. Our approach uses the signed edge-based agreement protocol, and constructs strict Lyapunov functions for signed networks described by signed edge-Laplacian matrices containing multiple zero eigenvalues. Numerical simulations validate our theoretical results.
\end{abstract} 

\begin{IEEEkeywords}
Open multi-agent systems, signed networks, signed edge-based agreement protocol, switched systems.
\end{IEEEkeywords}

\section{Introduction}
\label{sec:introduction}
\IEEEPARstart{O}{pen} multi-agent systems (OMAS) are dynamic networks in which agents and edges can be added and removed over time. These systems naturally arise in applications such as social networks, where the size and structure of the graph evolve due to arrivals and departures of the participants \cite{hendrickx2016open,hendrickx2017open}. In such settings, relationships may also change as individuals' interactions evolve through the formation of new friendships, the dissolution of old ones, or shifts in trust or like/dislike over time. OMAS also appear in sensor-based robotic systems, where the topology is adapted based on inter-agent distances to maintain global connectivity of the graph and ensure collision avoidance among nearby agents \cite{restrepo2022consensus,dimarogonas2007decentralized}. 

Early work on OMAS includes \cite{hendrickx2016open}, which studies the consensus problem via a discrete-time gossip algorithm with deterministic agent arrivals and departures. This was later extended to random arrivals in \cite{hendrickx2017open}. Similarly, \cite{abdelrahim2017max} addresses the max-consensus problem using discrete-time gossip-based interactions. Later, in \cite{franceschelli2018proportional, franceschelli2020stability}, new stability definitions for OMAS are introduced. 
Other works on OMAS in social networks and distributed computation include \cite{ vizuete2022resource, varma2018open, oliva2023sum}. In \cite{xue2022stability}, a novel stability framework was proposed to address the consensus problem in OMAS with nonlinear dynamics. The key idea is to model the OMAS as a switched system, where the system switches modes whenever at least one node or edge is added or removed. 
Under an average dwell time condition, the framework guarantees the (practical) stability of the overall system.  
Following the work of \cite{xue2022stability}, \cite{restrepo2022consensus} tackled the problem of formation control for OMAS modeled by first-order systems over undirected graphs and under inter-agent constraints.

In all previous references, it is assumed that all agents cooperate with each other. However, in many real-world scenarios, agents may exhibit antagonistic behaviors. Examples include robotic applications such as herding control \cite{sebastian2022adaptive} and social networks where agents compete \cite{altafini2012_6329411,Fontan2021Role,Fontan2021Signed} and may spread disinformation \cite{csekerciouglu2024distributed}, to name just a few. This work is motivated by such real-world scenarios, particularly addressing (i) changes in the nature of interactions and (ii) the presence of multiple influential entities, referred to as leaders. 
A well-established approach to modeling cooperative and antagonistic interactions in dynamical systems is through the use of \textit{signed graphs} \cite{altafini2012_6329411,Shi2019Dynamics}. In this framework, cooperative interactions are represented by positive edge weights, while antagonistic interactions are modeled with negative weights.

In this paper, we study the synchronization of signed OMAS interconnected over directed signed graphs. We consider systems where new nodes and edges can be added or removed, the nature of the interconnections can change signs, and the direction of information exchange may shift, leading to role changes among agents. Our main contribution is
a novel stability approach for signed OMAS interconnected over dynamic directed signed graphs, potentially with multiple leader nodes, by extending the Lyapunov equation-based analysis to signed edge Laplacians with multiple zero eigenvalues, enabling the construction of strict Lyapunov functions. To the best of our knowledge, this is the first attempt to address OMAS over directed and signed graphs. 
As in \cite{xue2022stability}, we model the signed OMAS using a switched system representation. A first work in this direction is \cite{CDC2025OMAS}, where we addressed the problem of bipartite consensus over OMAS, but only considering undirected signed graphs. In this paper, we address a more general case scenario, where we consider directed signed graphs, and the presence of leaders.

From a technical perspective, our main results build on \cite{csekerciouglu2024distributed} and the framework introduced in \cite{DYNCON-TAC16}, where the dynamical system is decomposed into two interconnected subsystems: the dynamics of the weighted average system and the synchronization errors relative to that average. We extend the latter component to directed signed graphs, described in terms of the edges of the graph. Specifically, we reformulate the synchronization problem as a stability problem for synchronization errors, defined as the difference between the edge states and the weighted edge average. Our analysis is conducted in signed edge-based coordinates. 
Then, building on \cite{ECC20_EP}, which provides a Lyapunov characterization for Laplacians with a single zero eigenvalue, we extend this result to Laplacians with multiple zero eigenvalues. Unlike \cite{csekerciouglu2024distributed}, we specifically consider signed edge Laplacians. Our contributions include the construction of strict Lyapunov functions and the analysis of dynamic signed digraphs, where node additions/removals, cooperation–antagonism switches, and leader–follower role shifts directly impact the system's convergence behavior, setting our work apart from the aforementioned references.

\section{Preliminaries}\label{section3}
\textit{Notation:} $\lvert \cdot \rvert$ denotes the absolute value for scalars, the Euclidean norm for vectors, and the spectral norm for matrices. 
$\mathbb{R}$ is the set of real numbers and $\mathbb{R}_{\geq 0}$ is the nonnegative orthant. $A>0$ ($A\ge 0$) indicates that $A$ is a positive definite (positive semidefinite) matrix. 

\subsection{Signed digraphs}\label{section3a} 
Let $\mathcal{G}_s = (\mathcal{V}, \mathcal{E})$ be a \textit{signed} digraph, where $\mathcal{V} = \{ \nu_1, \nu_2, \dots, \nu_N \}$ is the set of $N$ nodes and $\mathcal{E} \subseteq \mathcal{V} \times \mathcal{V}$ is the set of edges. The edge $\varepsilon_k = ( \nu_j, \nu_i ) \in \mathcal{E}$ of a digraph denotes that the agent $\nu_i$, which is the terminal node (tail of the edge), can obtain information from the agent $\nu_j$, which is the initial node (head of the edge). The adjacency matrix of $\mathcal{G}_s$ is $A := [a_{ij}] \in \mathbb{R}^{N \times N}$, where $a_{ij}\ne 0$ if and only if $(\nu_j,\nu_i)\in \mathcal{E}$. $a_{ij} > 0$ if and only if the edge $(\nu_j, \nu_i)$ has a positive sign, indicating a cooperative relationship, and $a_{ij} < 0$ if and only if the edge $(\nu_j, \nu_i)$ has a negative sign, indicating an antagonistic relationship. In this work, we only consider unweighted digraphs, such that $a_{ij} = \{0, 1, -1\}$, without any self-loops. A signed digraph is said to be digon sign-symmetric if $a_{ij} a_{ji} \ge 0$. It means that, if $(\nu_i, \nu_j)$ and $(\nu_j, \nu_i)$ both exist, the interaction between two interconnected agents always has the same sign in both directions. Then, we make the following assumption:

\begin{ass}\label{standing_ass}
    The signed digraph is unweighted and digon sign-symmetric.
\end{ass}

A directed path is a sequence of distinct adjacent nodes in a digraph. When the nodes of the path are distinct except for its end vertices, the directed path is called a directed cycle. A directed spanning tree is a directed tree subgraph that includes all the nodes of the digraph. In this structure, every agent (node) has a parent node, except for the \textit{root node}, which has no incoming edges and is connected to every other node via directed paths. A tree contains no cycles. A digraph is said to be strongly connected if there exists a directed path between every pair of nodes. A directed graph is weakly connected if replacing all its directed edges with undirected edges results in a connected undirected graph.

In this work, we consider signed digraphs that may contain multiple leader nodes. A leader node is defined as either a root node or a node that is part of a \textit{rooted strongly-connected component (rooted SCC)}. A rooted SCC is a strongly connected subgraph without incoming edges. A \textit{leader group} is either a single root node (representing a single-node leader group) or an entire rooted SCC (representing multiple leader nodes interconnected in a strongly connected subgraph). If the graph contains at least one leader group, the remaining nodes are referred to as followers--- See Figure \ref{leadergroups}.

A signed graph is said to be \textit{structurally balanced} (SB) if it can be split into two disjoint sets of vertices $\mathcal{V}_{1}$ and $\mathcal{V}_{2}$, where $\mathcal{V}_{1} \cup \mathcal{V}_{2} = \mathcal{V}, \mathcal{V}_{1} \cap \mathcal{V}_{2} = \emptyset$ such that for every $\nu_{i}, \nu_{j} \in \mathcal{V}_{p}, p \in \{ 1,\, 2 \}$, if $a_{ij} \geq 0$, while for every $\nu_{i} \in \mathcal{V}_{p}, \nu_{j} \in \mathcal{V}_{q}$, with $p,q \in \{ 1, \, 2 \}, p \neq q$, if $a_{ij} \leq 0$\footnote{Agents in different subsets need not be directly connected: agents belonging to the same subset are cooperative, while agents in different subsets are antagonistic, as determined by the signs of paths connecting them.}. It is {\it structurally unbalanced} (SUB), otherwise \cite{altafini2012_6329411}. The signed Laplacian matrix $L_{s}=[{\ell _{s_{ij}}}] \in \mathbb{R}^{N \times N}$ associated with $\mathcal{G}_s$ is defined as ${\ell _{s_{ij}}} := {\sum\limits_{k \leq {N}} {{ \lvert a_{ik} \rvert }} }$, if ${i = j}$; and ${\ell _{s_{ij}}} := { - {a_{ij}}}$, if ${i \ne j}$ \cite{altafini2012_6329411,Shi2019Dynamics}.
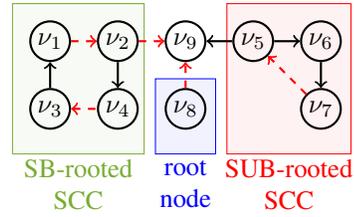
\begin{figure}[!t]\centering
\begin{tikzpicture}[node distance={9mm}, thick,main/.style = {draw, circle}] 
    \definecolor{pastelgreen}{rgb}{0.4660 0.6740 0.1880}
	\tikzset{mynode/.style={circle,draw,minimum size=15pt,inner sep=0pt,thick},}
    \draw[pastelgreen,very thin, fill=pastelgreen!5](-0.5,0.5) rectangle(1.25,-1.5);
    \node[align=center,pastelgreen] at (0.375,-1.9) {SB-rooted\\SCC};
    \draw[red,very thin, fill=red!5](2.35,0.5) rectangle(4,-1.5);
    \node[align=center,red] at (3.175,-1.9) {SUB-rooted\\SCC};
    \draw[blue,very thin, fill=blue!5](1.4,-0.5) rectangle(2.2,-1.5);
    \node[align=center,blue] at (1.8,-1.9) {root\\node};
	\node[mynode] (1) {$\nu_1$}; 
	\node[mynode] (2) [right  of=1] {$\nu_2$};
	\node[mynode] (3) [below of=1] {$\nu_3$};
	\node[mynode] (4) [below of=2] {$\nu_4$};
    \node[mynode] (9) [right of=2] {$\nu_9$};
    \node[mynode] (5) [right of=9]{$\nu_5$}; 
	\node[mynode] (6) [right of=5] {$\nu_6$};
	\node[mynode] (7) [below of=6] {$\nu_7$};
    \node[mynode] (8) [below of=9] {$\nu_8$};
    \draw[red, dash pattern=on 1mm off 1mm][->](1) -- node[midway, above] {}(2);
	\draw[<-] (1) -- node[midway, left] {}(3);
	\draw[->] (2) -- node[midway, right] {}(4);
	\draw[<-, color=red, dash pattern=on 1mm off 1mm] (3) -- node[midway, below] {}(4);
    \draw[->](5) -- node[midway, above] {}(6);
    \draw[->](6) -- node[midway, above] {}(7);
    \draw[red, dash pattern=on 1mm off 1mm][->](7) -- node[midway, above] {}(5);
    \draw[red, dash pattern=on 1mm off 1mm][->](2) -- node[midway, above] {}(9);
    \draw[->](5) -- node[midway, above] {}(9);
    \draw[red, dash pattern=on 1mm off 1mm][->](8) -- node[midway, above] {}(9);
	\end{tikzpicture}
\caption{A signed digon sign-symmetric digraph containing $3$ leader groups, where the black edges represent cooperative interactions and the dashed red edges represent antagonistic interactions. The first leader group is a SB-rooted SCC containing the leader nodes $\nu_1, \nu_2, \nu_3,$ and $\nu_4$, the second leader group is a SUB-rooted SCC containing the leader nodes $\nu_5, \nu_6,$ and $\nu_7$, and the third leader group is a root (leader) node, $\nu_8$. The node $\nu_9$ is the follower node.} \label{leadergroups}
\end{figure}
We now recall the definitions of the signed incidence matrices of a signed digraph \cite{du2019further,sekercioglu2024control}. 
Consider a signed graph $\mathcal{G}_s$ containing $N$ nodes and $M$ edges. The signed
incidence matrix $E_s \in \mathbb{R}^{N \times M}$ of $\mathcal{G}_s$ is defined as ${[E_{s}]_{ik}} := +1$, if $\varepsilon_k = (\nu_i, \nu_j)$ or if $\nu_i,\nu_j$ are competitive and $\varepsilon_k = (\nu_j, \nu_i$; ${[E_{s}]_{ik}} := -1$, if $\nu_i,\nu_j$ are cooperative and $\varepsilon_k = (\nu_j, \nu_i)$; ${[E_{s}]_{ik}} := 0$, otherwise.
The signed in-incidence matrix $E_{{s \odot}} \in \mathbb{R}^{N \times M}$ of $\mathcal{G}_s$ is defined as ${[E_{s \odot}]_{ik}} := -1$, if $\nu_i,\nu_j$ are cooperative and $\varepsilon_k = (\nu_j, \nu_i)$; ${[E_{s \odot}]_{ik}} := +1$, if $\nu_i,\nu_j$ are competitive and $\varepsilon_k = (\nu_j, \nu_i)$; ${[E_{s \odot}]_{ik}} := 0$, otherwise.
Here, $\varepsilon_k$ is the oriented edge interconnecting nodes $\nu_i$ and $\nu_j$,\ $k \leq M,\ i,j \leq N$. The signed Laplacian $L_{s} \in \mathbb{R}^{N \times N}$ and the signed edge Laplacian $L_{e_{s}} \in \mathbb{R}^{M \times M}$ of 
a signed digraph are given as
$L_{s} = E_{s \odot}E_{s}^\top,\ L_{{e_s}} = E_{s}^\top E_{{s \odot}}.$
The signed Laplacian of a signed digraph is not symmetric, and its eigenvalues all have nonnegative real parts. 

\subsection{Stability of cooperative OMAS}\label{section3b} 
Before delving into the details, we first recall the following crucial definitions and theorem from \cite{xue2022stability} for cooperative OMAS, which serve as the basis for our analysis of the stability of the system. 

Let $\sigma : \mathbb{R}_{\geq 0} \to \mathcal{P}$ be the switching signal associated with topology changes, where $\mathcal{P} := \{1, 2, \dots, s\}$ represents the set of $s$ possible switching modes. 
Each mode of the system is denoted by $\phi \in \mathcal{P}$, with $\phi = \sigma(\tau)$ for $\tau \in [t_l, t_{l+1})$, where $t_l$ and $t_{l+1}$ are consecutive switching instants. A switching instant $t_l$ is defined as the time at which a new edge is created or being removed between agents, a new agent enters or leaves the system, or the sign/direction of an interconnection changes.
Consider the following switched system modeled by
\begin{align}\label{app1}
    \dot{x}_{\sigma(t)}(t) = f_{\sigma(t)}(x_{\sigma(t)}(t)),
\end{align}
where $x_{\sigma(t)} \in \mathbb{R}^{N_{\sigma(t)}}$ is the state. $N_{\sigma(t)}$ indicates that the system's dimension may change with each switch, with $N_{\phi}<+\infty$. For this system, we have the following.
\begin{definition}
The system \eqref{app1} is said to be globally uniformly practically stable, if there exist a class $\mathcal{KL}$ function $\beta$ and a scalar $\epsilon \geq 0$ such that for any initial state $x_{\sigma(t_0)}(t_0)$ and admissible $\phi$,
$\|x_{\sigma(t)}(t)\| \leq \beta(\|x_{\sigma(t_0)}(t_0)\|, t - t_0) + \epsilon, \ \forall t \geq t_0,$
where $\epsilon$ is called the ultimate bound of $x_{\sigma(t)}(t)$ as $t \to +\infty$. 
\end{definition}
Next, we introduce the concept of transition-dependent average dwell time for the switching signal $\sigma(t)$, which ensures that the switching signal meets the required conditions for stability--- See Theorem \ref{thmxue} below and Theorem \ref{prop:result1} in Section \ref{section5}. 

\begin{definition}(\cite[Definition 2]{xue2022stability}) \label{avr_dwell_time_slow} On a given interval $[t_0, t_f)$, with $t_f > t_0 \geq 0$, consider any two consecutive modes $\hat{\phi}, \phi \in \mathcal{P}$, where $\hat{\phi}$ precedes $\phi$. Let $N_{\phi,\hat{\phi}}(t_0, t_f)$ denote the total number of switchings from mode $\hat{\phi}$ to mode $\phi$ within $[t_0, t_f)$, and let $T_{\phi}(t_0, t_f)$ denote the total active duration of mode $\phi$ within $[t_0, t_f)$. The constant $\tau_{\phi,\hat{\phi}} > 0$ satisfying
$    N_{\phi,\hat{\phi}}(t_0, t_f) \leq \hat{N}_{\phi,\hat{\phi}} + \frac{T_{\phi}(t_0, t_f)}{\tau_{\phi,\hat{\phi}}},$
for some given scalar $\hat{N}_{\phi,\hat{\phi}} \geq 0$, is called the slow transition-dependent average dwell time of the switching signal $\sigma(t)$. 
\end{definition}

\begin{thm} (\cite{xue2022stability})\label{thmxue}
    Consider the system \eqref{app1} with the switching signal $\sigma(t)$ on $[t_0,t_f],$ $0\leq t_0 < t_f < + \infty$. Assume that, for any two consecutive modes $\phi$, $\hat \phi \in \mathcal{P}$, where $\hat \phi$ precedes $\phi$, there exist class $\mathcal{K}_{\infty}$ functions $\underline{\kappa}$, $\overline{\kappa}$, constants $\gamma_{\phi},$ $\Omega_{\phi, \hat \phi}>0$, $\Theta \geq 0$, and a non-negative function $V_{\phi}(t, x_{\phi}(t)) : \mathbb{R}_{\geq 0} \times \mathbb{R}^{N_{\phi}} \to \mathbb{R}_{\geq 0}$, such that  $\forall t \in [t_0, t_f]$, 
    \begin{align*} 
        &\underline{\kappa} (\lvert x_{\phi} \rvert) \leq V_{\phi}(t, x_{\phi}(t)) \leq \overline{\kappa} (\lvert x_{\phi} \rvert) \\
        &\dot V_{\phi}(t, x_{\phi}(t)) \leq -\gamma_{\phi} V_{\phi}(t, x_{\phi}(t))\\
        &V_{\phi}(t_k^+, x_{\phi}(t_k^+)) \leq \Omega_{\phi, \hat \phi} V_{\hat \phi}(t_k^-, x_{\hat \phi}(t_k^-)) + \Theta,
    \end{align*}
     where $\gamma_{\phi} > 0$ and $\Omega_{\phi, \hat \phi}>1$. 
     Moreover, asssume that $\sigma(t)$ satisfies
        $\tau_{\phi,\hat{\phi}} \geq \frac{\ln(\Omega_{\phi,\hat{\phi}})}{\gamma_{\phi}},$
    for any $\hat \phi \in \mathcal{P}$ 
    where $\tau_{\phi,\hat{\phi}}$ is defined in Definition \ref{avr_dwell_time_slow}. 
    Then, \eqref{app1} is globally uniformly practically stable. 
\end{thm}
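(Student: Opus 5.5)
The plan is the classical multiple-Lyapunov-function argument with average dwell time, adapted to the jump inequality that carries the additive term $\Theta$. Fix the initial time $t_0$ and any $t\ge t_0$. Let $t_1<t_2<\dots<t_{N_\sigma}$ be the switching instants in $(t_0,t]$, and write $V(t):=V_{\sigma(t)}(t,x_{\sigma(t)}(t))$. This function is piecewise continuous and right-continuous at switches.

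On each interval $[t_k,t_{k+1})$ the mode is fixed, say $\phi$. The decay condition together with the comparison lemma gives
\[
V(t^-_{k+1})\le e^{-\gamma_\phi (t_{k+1}-t_k)}V(t_k).
\]
At each switch from $\hat\phi$ to $\phi$ we use $V(t_k)\le \Omega_{\phi,\hat\phi}V(t_k^-)+\Theta$. Iterating from $t_0$ to $t$ gives
\[
V(t)\le \Big(\prod_{k}\Omega_{\sigma(t_k),\sigma(t_k^-)}\Big)\, e^{-\sum_\phi \gamma_\phi T_\phi(t_0,t)}\,V(t_0)+\Theta\sum_{j} \Big(\prod_{k>j}\Omega_{\sigma(t_k),\sigma(t_k^-)}\Big) e^{-\sum_\phi\gamma_\phi T_\phi(t_j,t)} .
\]
Here each inner product and exponent is restricted to the switches and mode durations after $t_j$.

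Next, group the products by transition type. We have $\prod_k \Omega = \exp\big(\sum_{\phi,\hat\phi} N_{\phi,\hat\phi}\ln\Omega_{\phi,\hat\phi}\big)$. Apply the slow transition-dependent average dwell time bound of Definition~\ref{avr_dwell_time_slow}, $N_{\phi,\hat\phi}\le \hat N_{\phi,\hat\phi}+T_\phi/\tau_{\phi,\hat\phi}$, together with $\tau_{\phi,\hat\phi}\ge \ln\Omega_{\phi,\hat\phi}/\gamma_\phi$. This yields $\sum N_{\phi,\hat\phi}\ln\Omega_{\phi,\hat\phi}\le C+\sum_\phi \gamma_\phi T_\phi\,\rho$ with $C:=\sum \hat N_{\phi,\hat\phi}\ln\Omega_{\phi,\hat\phi}$.

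This is the main obstacle. With equality in the dwell-time condition the exponent only cancels and gives no decay. To obtain a $\mathcal{KL}$ bound I will use a strict margin. Choose $\tau_{\phi,\hat\phi}>\ln\Omega/\gamma_\phi$, or equivalently run the argument with $\gamma_\phi$ replaced by $(1-\delta)\gamma_\phi$. This remains valid when the inequality is strict, and the non-strict case is handled by the standard perturbation, shrinking $\gamma$ slightly and using finiteness of $\mathcal{P}$. The result is the first term bounded by $e^{C}e^{-\bar\gamma(t-t_0)}V(t_0)$, where $\bar\gamma:=\delta\min_\phi\gamma_\phi>0$.

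The same estimate applies to each summand of the $\Theta$-series over the interval $(t_j,t]$. The $j$-th term is then bounded by $\Theta e^{C}e^{-\bar\gamma(t-t_j)}$. Bounding the sum needs a lower bound on the spacing of switches. On each mode interval the bound $N_{\phi,\hat\phi}\le \hat N+T_\phi/\tau$ implies that the number of switches in any window of length $T$ is at most $\hat N_{\rm tot}+T/\tau_{\min}$. A geometric-series argument over windows of unit length then bounds $\sum_j e^{-\bar\gamma(t-t_j)}$ by a constant $K$ independent of $t$, $t_0$ and $x$. Therefore
\[
V(t)\le e^{C}e^{-\bar\gamma(t-t_0)}V(t_0)+\Theta e^{C}K .
\]

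Finally, apply the sandwich bounds. From $\underline\kappa(|x|)\le V$ and $\underline\kappa^{-1}(a+b)\le\underline\kappa^{-1}(2a)+\underline\kappa^{-1}(2b)$ we get
\[
|x_{\sigma(t)}(t)|\le \underline\kappa^{-1}\big(2e^{C}e^{-\bar\gamma(t-t_0)}\overline\kappa(|x(t_0)|)\big)+\underline\kappa^{-1}(2\Theta e^{C}K).
\]
The first term defines a class-$\mathcal{KL}$ function $\beta$, and the second gives $\epsilon$. This establishes global uniform practical stability. Since all constants are independent of $t_f$, the conclusion extends to all $t\ge t_0$.
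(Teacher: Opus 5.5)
There is a genuine gap, and it cannot be closed, because the statement as recalled here is false without extra hypotheses. The paper gives no proof of its own; it imports the result from Xue et al.

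The first failure is your grouping step, where you bound $\sum N_{\phi,\hat\phi}\ln\Omega_{\phi,\hat\phi}$ by $C+\sum_\phi\gamma_\phi T_\phi\,\rho$ with $\rho$ never defined. What the hypotheses give, for each pair, is $N_{\phi,\hat\phi}\ln\Omega_{\phi,\hat\phi}\le\hat N_{\phi,\hat\phi}\ln\Omega_{\phi,\hat\phi}+\gamma_\phi T_\phi$. In Definition~\ref{avr_dwell_time_slow}, $T_\phi$ is the \emph{total} active time of $\phi$, whatever mode preceded it. So every predecessor $\hat\phi$ uses up the whole decay budget $\gamma_\phi T_\phi$. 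Summing over the $d_\phi$ modes from which $\phi$ is entered gives $d_\phi\gamma_\phi T_\phi$, which no longer cancels $-\gamma_\phi T_\phi$ once $d_\phi\ge 2$. This is not an artifact of the estimate. Take a scalar state and three modes, all with $\dot x=-\tfrac12 x$ and $V_\phi=x^2$ (so $\gamma_\phi=1$), and $\Theta=0$. Use state transitions giving $V^+=e^{7/4}V^-$ on entering mode $1$ and $V^+=e^{3/4}V^-$ on entering mode $2$ or $3$, with the periodic sequence $1,2,1,3,1,2,\dots$ and unit time per visit. Then $\tau_{1,2}=\tau_{1,3}=2>7/4$ and $\tau_{2,1}=\tau_{3,1}=1>3/4$, all with $\hat N_{\phi,\hat\phi}=1$, so every hypothesis holds even strictly. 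Yet over each period of length $4$, $V$ is multiplied by $e^{7/2+3/2-4}=e$. To make your argument work you must assume an aggregated condition such as $\sum_{\hat\phi}\ln\Omega_{\phi,\hat\phi}/\tau_{\phi,\hat\phi}\le(1-\delta)\gamma_\phi$ for some $\delta\in(0,1)$. This follows from strict inequalities when each mode has a single predecessor. Alternatively, use a mode-dependent dwell time that counts all entries into $\phi$ together with $\max_{\hat\phi}\Omega_{\phi,\hat\phi}$.

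The second failure is your treatment of the non-strict case. Replacing $\gamma_\phi$ by $(1-\delta)\gamma_\phi$ \emph{raises} the required dwell time to $\ln\Omega_{\phi,\hat\phi}/((1-\delta)\gamma_\phi)$. So if $\tau_{\phi,\hat\phi}=\ln\Omega_{\phi,\hat\phi}/\gamma_\phi$, no $\delta>0$ is admissible; the perturbation goes the wrong way, and finiteness of $\mathcal{P}$ does not help. Equality is genuinely insufficient. Take two modes with $\dot x=-\tfrac{\gamma}{2}x$, $V_\phi=x^2$, transition $x(t_k^+)=\sqrt{\Omega}\,x(t_k^-)$, $\Theta=0$, and switching every $\ln\Omega/\gamma$ time units. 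All hypotheses hold with $\tau_{\phi,\hat\phi}=\ln\Omega/\gamma$ and $\hat N_{\phi,\hat\phi}=1$. Yet $x(t_k^+)=x(t_0)$ for every $k$, so no bound $\beta(|x(t_0)|,t-t_0)+\epsilon$ with $\epsilon$ independent of $x(t_0)$ exists. With the transition $V^+=\Omega V^-+\Theta$, $\Theta>0$, $V$ even increases by $\Theta$ at each switch.

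Once you add a strict margin and the aggregated condition above, the rest of your proof is sound. That includes the iterated estimate and the construction of $\beta$ and $\epsilon$ through $\underline\kappa^{-1}$. It also includes the window-counting bound that keeps $\sum_j e^{-\bar\gamma(t-t_j)}$ uniformly bounded. That bound uses the dwell-time inequality on every subinterval $(t_j,t]$, not only on one fixed $[t_0,t_f)$, and it picks up a constant factor from summing over the finitely many pairs.
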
 

\section{Model and Problem Formulation}\label{section2}
We consider a network of agents interconnected over a signed digraph with cooperative and antagonistic interactions. The considered digraph is dynamic, meaning that agents can join or leave, edges may be added or deleted, and the sign or direction of the interactions can change over time. These changes in the MAS can be modeled using a switched system representation as in Section \ref{section3b}. For an illustration of different modes in a signed OMAS, see Figure \ref{graph}.

At each mode $\phi \in \mathcal{P}$, consider the MAS composed of $N_{\phi}$ dynamical systems modeled by
\begin{align}\label{FO}
	\dot x_i = u_i,\quad i \in \{1, 2, \dots, N_{\phi} \},
\end{align}
where $x_i \in \mathbb{R}$ is the state of the $i$th agent, and $u_i \in \mathbb{R}$ is the control input. For notational simplicity and without loss of generality, we assume that $x_i \in \mathbb{R}$, but all contents of this paper apply to systems of higher dimension $x_i \in \mathbb{R}^n, n \geq 1$, using a Kronecker product. 

The agents interact on a dynamic signed digraph $\mathcal{G}_{s _{\phi}}$ that contains $N_{\phi}$ nodes and $M_{\phi}$ edges. The system \eqref{FO} is interconnected with the control law
\begin{align}\label{CL}
	u_i = -k_{1} \sum_{j=1}^{N_{\phi}} \lvert a_{\phi_{ij}} \rvert \left[ x_i - \mbox{sign}(a_{\phi_{ij}} )x_j \right], 
\end{align}
where $k_{1}>0$, and $A_{\phi} = [a_{\phi_{ij}}]$ is the adjacency matrix at mode $\phi \in \mathcal{P}$, with $a_{\phi_{ij}} \in \{ 0, \pm 1 \}$ representing the adjacency weight between nodes $\nu_j$ and $\nu_i$. It is well known that under the distributed control law \eqref{CL}, agents interconnected over a SB signed digraph achieve bipartite consensus if and only if the signed graph contains a directed spanning tree, while those over a SUB digraph reach trivial consensus provided that the graph contains a directed spanning tree and has no root node \cite{hu2014emergent,altafini2012_6329411}. Thus, we pose the following assumption on the connectivity of the initial signed graph.
\begin{ass}\label{ass1}
    The initial signed digraph contains a directed spanning tree.
\end{ass}

We consider a dynamic signed digraph with node additions and removals to model systems where agents progressively join or leave the system over time. Graph connectivity is crucial for sustaining the desired collective behavior of the network: for this reason, in our formulation, we assume that at each mode the signed digraph is weakly connected, which can be ensured by allowing only newly added agents to be removed\footnote{Our results can be extended to include disconnected signed digraphs for each mode. However, this would imply an extension of the derivations provided in Section~\ref{section4}. Therefore, we omit this extension for lack of space and to preserve the discussion on synchronization.}.

We are interested in guaranteeing the stability of the OMAS under all and possibly infinite switching modes $\mathcal{P}$. 
The possible achievable control objectives for the system \eqref{FO} interconnected with the control law \eqref{CL} over a signed digraph depend on the \textit{structural balance} property and the graph topology. In particular, the presence of rooted SCCs and root nodes plays a crucial role in determining the agents' convergence behavior. To formulate the achievable control objective for the OMAS, it is important to address all achievable control objectives for each mode. On the one hand, a rooted SCC influences the convergence of the rest of the network, depending on whether it is SB or SUB. If the digraph contains a single root node, the system exhibits a leader-following bipartite consensus, where all agents converge either to the leader's state or its opposite state if the digraph is SB. In such cases, the root node or the rooted SCC acts as a \textit{leader group}, influencing the remaining agents, referred to as \textit{followers}. More formally, we define a leader group as either a single root node or a rooted SCC consisting of multiple nodes. On the other hand, if the digraph contains multiple root nodes or rooted SCCs, each of these leader groups dictates the behavior of the remaining agents. Instead of converging to a single or bipartite equilibrium, the agents settle within a region defined by the states of the leader groups. In this particular case, the digraph no longer contains a directed spanning tree, as independent leader groups are not mutually reachable. Therefore, for signed digraphs with multiple leaders groups, we pose the following connectivity assumption. 
\begin{ass}\label{ass2} The signed digraph contains $m$ leader groups, where $l_1$ is the number of root nodes, $l_{2_{SB}}$ is the number of SB-rooted SCCs, and $l_{2_{SUB}}$ is the number of SUB-rooted SCCs with $m=l_1 + l_2 > 1$ and $l_2 = l_{2_{SB}} +l_{2_{SUB}}$; $|\mathcal L|=k <N$ and $\mathcal F$ are the sets containing the indices corresponding to the nodes in the leader groups and followers respectively.
\begin{enumerate}
	\item The $k$ leader nodes are organized into $m$ leader groups of $p_i$ nodes, each forming a strongly connected subgraph (or $p_i=1$ if it is a single root node), where $1 < m \leq k,\ i \leq m$, and $\sum_{i=1}^m p_i = k$.
	\item There exists at least one path from $\nu_i$ to $\nu_j$ for all $j \in \mathcal{F}$ and $i \in \mathcal{L}$.
\end{enumerate}
\end{ass}

The achievable control objective for \eqref{FO} in closed loop with the distributed control law \eqref{CL} and interconnected over a SB digraph:
\begin{itemize}
    \item containing a directed spanning tree (Assumption \ref{ass1}) is to ensure agents achieve \textit{bipartite consensus},  where agents converge to the same value in modulus but not in sign, that is,
$	\lim_{t \to \infty} \left[ x_{i}(t) - \mbox{sign}(a_{\phi_{ij}})x_{j}(t) \right] = 0, \ \forall i,j \leq N_{\phi}.$
    \item containing $m$ leader groups, under Assumption \ref{ass2}, is to ensure agents achieve \textit{bipartite containment}, that is, $ \lim_{t \to \infty} \big[\, \lvert x_{j}(t) \rvert - \max_{i \in \mathcal{L}}\lvert x_{i}(t) \rvert \,\big] \leq 0,$ for each $j \in \mathcal{F}$.
\end{itemize} 

The achievable control objective for \eqref{FO} in closed loop with the distributed control law \eqref{CL} interconnected over a SUB digraph:
\begin{itemize}
    \item containing a directed spanning tree (Assumption \ref{ass1}) and a SUB-rooted SCC, without a root (leader) node, is to ensure agents achieve \textit{trivial consensus}, where all agents converge to zero, that is,
	   $\lim_{t \to \infty} x_{i}(t) = 0, \quad \forall i \leq N_{\phi}.$
    \item containing a directed spanning tree (Assumption \ref{ass1}) and SB-rooted SCC or a root node, is to ensure agents achieve \textit{interval bipartite consensus}, that is,
	  $ \lim_{t \to \infty} x_{i}(t) = x^* \in [-\theta, \theta], \ \forall i \leq N_{\phi},$
    where $\theta > 0$.
    \item containing $m$ leader groups, under Assumption \ref{ass2}, is to ensure agents achieve \textit{bipartite containment}, that is, 
	  $ \lim_{t \to \infty} \big[\, \lvert x_{j}(t) \rvert - \max_{i \in \mathcal{L}}\lvert x_{i}(t) \rvert \,\big] \leq 0,\ j \in \mathcal{F}.$
\end{itemize}

In this article, we show that, under Assumptions~\ref{standing_ass} and \ref{ass1}, the signed OMAS is uniformly practically stable. 
In the case agents achieve trivial consensus or bipartite consensus, their edges states, defined as
\begin{equation}\label{def_e}
    e_{\phi_k} = x_{i} - \mbox{sign}(a_{\phi_{ij}})x_{j},\quad \varepsilon_k =(\nu_j,\nu_i) \in \mathcal{E}_{\phi}
\end{equation}
where $k \leq M_{\phi}$ denotes the index of the interconnection between the $j$th and $i$th agents, converge to zero ---See \cite{CDC2025OMAS}. However, as the networks considered here, in addition to signs on the edges, contain a priori, rooted SCCs or multiple root nodes, the resulting Laplacians can also have multiple zero eigenvalues in some cases. This also results, in general, in multiple convergence points for agents and their edge states, which means that edge states do not always converge to zero. Rather than handling multiple control objectives individually, we express them through a single equivalent objective. Then, following the framework laid in \cite{DYNCON-TAC16} and
extending it to signed networks with associated Laplacians containing multiple zero eigenvalues and to the signed edge-based formulation, we define the weighted average system for the edge states.

Let $\xi_{\phi}$ be the number of zero eigenvalues of $L_{e_{s_\phi}}$ in mode $\phi \in \mathcal{P}$. Then, we define the weighted edge average state as
\begin{equation}\label{e_m}
	e_{m_{\phi}} := \sum_{i=1}^{\xi_{\phi}} v_{r_{\phi_i}}v_{l_{\phi_i}}^{\top} e_\phi,
\end{equation}
where $v_{r_{\phi_i}}$ and $v_{l_{\phi_i}}$ are the right and left eigenvectors associated with the zero eigenvalues of the edge Laplacian, $e_\phi := [e_{\phi_1} \ e_{\phi_2} \ \cdots \ e_{\phi_{M_\phi}}]^\top$, and $e_{\phi_k}$ is defined in \eqref{def_e}. We define the synchronization errors as
\begin{align}\label{def_e_bar}
	\bar e_{{\phi}} &= e_\phi - e_{m_{\phi}} = [I - \sum_{i=1}^{\xi_{\phi}} v_{r_{\phi_i}}v_{l_{\phi_i}}^{\top}]e_\phi,
\end{align}
where $\bar{e}_{\phi} := [\bar e_1 \ \bar e_2 \ \dots \ \bar e_{M_\phi}]^\top$. Then, the control objective is equivalent to making the synchronization errors converge to zero, that is, 
\begin{align}\label{obj}
	\lim_{t \to \infty} \bar{e}_{\phi_k}(t) = 0, \quad \forall k \leq M_{\phi}.
\end{align}
\begin{remark}
In the case where the nullspaces of $L_{e_s}^\top$ and $E_s$ are equal to each other, $e_{m_{\phi}} = 0$ and the synchronization errors are equal to the edge states.
\end{remark}

\section{Lyapunov Equation for Directed Signed Edge Laplacians}\label{section4}
To establish synchronization of OMAS over signed digraphs, we prove the asymptotic practical stability of the set $\{ \bar e_\phi = 0\}$. In particular, we show how to construct strict Lyapunov functions, in the space of $\bar e_\phi$, referring to functions expressed in terms of $\bar e_\phi$, based on the following statements, which are original contributions of this paper and extend \cite[Proposition~1]{ECC20_EP} to the case of directed and signed Laplacians containing multiple zero eigenvalues and \cite[Proposition 1]{csekerciouglu2024distributed} to the case of signed edge Laplacians corresponding to directed graphs with multiple zero eigenvalues (Theorems~\ref{thm:lyap_eq_Le_directed} and \ref{proposition:lyap_eq}).

For a directed spanning tree graph, we have the following.

\begin{thm} \label{lemma:lyap_eq_spanningtree} Let $\mathcal{G}_{s}$ be a signed digraph containing $N$ agents interconnected by $M$ edges, and let $L_{e_{s}}$ be the associated edge Laplacian. If the graph $\mathcal{G}_{s}$ is a directed spanning tree, then for any $Q \in \mathbb{R}^{(N-1) \times (N-1)}, Q=Q^{\top}> 0$, there exists a matrix $P \in \mathbb{R}^{(N-1) \times (N-1)}$, $P=P^{\top}>0$ such that 
\begin{equation}\label{Lyap-eq}
    PL_{e_{s}} + L_{e_{s}}^{\top}P = Q.
\end{equation}
\end{thm}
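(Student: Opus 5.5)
Since $L_{e_s}$ appears in the positive form $PL_{e_s}+L_{e_s}^\top P=Q$, the plan is to show that $L_{e_s}$ is positive stable, that is, all its eigenvalues have strictly positive real parts. Once this holds, $-L_{e_s}$ is Hurwitz and standard Lyapunov theory gives a unique solution
\begin{equation*}
P=\int_0^\infty e^{-L_{e_s}^\top t}\,Q\,e^{-L_{e_s} t}\,dt ,
\end{equation*}
which is symmetric and positive definite for every $Q=Q^\top>0$. The argument therefore reduces to a spectral statement about $L_{e_s}=E_s^\top E_{s\odot}$, which here is an $(N-1)\times(N-1)$ matrix because a spanning tree has $M=N-1$ edges.

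\textbf{Step 1: relate the spectrum of $L_{e_s}$ to that of $L_s$.} We have $L_s=E_{s\odot}E_s^\top$ and $L_{e_s}=E_s^\top E_{s\odot}$. For any rectangular $A,B$, the products $AB$ and $BA$ share the same nonzero eigenvalues with the same algebraic multiplicities. Hence the spectrum of $L_{e_s}$, with $N-1$ eigenvalues, equals the spectrum of $L_s$, with $N$ eigenvalues, after removing exactly one zero. It then suffices to show that $L_s$ has a simple zero eigenvalue and that all its other eigenvalues have positive real part.

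\textbf{Step 2: spectrum of $L_s$ for a directed spanning tree.} A tree has no cycles, so it is automatically structurally balanced. Concretely, assign $d_i=\pm1$ by propagating signs from the root along the unique paths, and set $D=\mathrm{diag}(d_i)$. The gauge transformation $DL_sD$ is then the ordinary unsigned Laplacian $L$ of the underlying directed tree, and it has the same spectrum as $L_s$. Order the nodes topologically, root first. Each non-root node has exactly one parent and the root has no incoming edge, so $L$ is lower triangular with diagonal entries $0$ for the root and in-degree $1$ for every other node. Its eigenvalues are therefore $0$ (simple) and $1$ with multiplicity $N-1$. Combining with Step 1, $L_{e_s}$ has all eigenvalues equal to $1$, so it is positive stable.

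\textbf{Step 3: conclude.} Applying Lyapunov's theorem to the Hurwitz matrix $-L_{e_s}$ gives, for each $Q>0$, the desired $P=P^\top>0$. Positive definiteness follows from the integral representation: $x^\top Px=\int_0^\infty \lvert Q^{1/2}e^{-L_{e_s}t}x\rvert^2dt>0$ for $x\neq 0$.

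\textbf{Main obstacle.} The delicate point is the multiplicity bookkeeping in Step 1, specifically confirming that exactly one zero eigenvalue is removed when passing from $L_s$ to $L_{e_s}$. This holds because $E_s^\top$ and $E_{s\odot}$ both have full column rank $N-1$ on a tree. An alternative check is direct: each column of $E_{s\odot}$ has a single nonzero entry, at the tail node of its edge, and these tail nodes are distinct. With the edges ordered by their tail nodes, $L_{e_s}$ becomes triangular with unit diagonal, since entry $(k,k)$ equals $[E_s]_{ik}[E_{s\odot}]_{ik}=1$ by the sign conventions in the definitions. I would include this direct computation as a cross-check of Step 2.
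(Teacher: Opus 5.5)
Your proposal is correct and follows the same route as the paper: the nonzero eigenvalues of $L_s=E_{s\odot}E_s^\top$ and $L_{e_s}=E_s^\top E_{s\odot}$ coincide, so $-L_{e_s}$ is Hurwitz, and the standard Lyapunov theorem then gives $P$. You go further than the paper by showing explicitly, via the gauge transformation and the triangular structure, that every eigenvalue of $L_{e_s}$ equals $1$; the bookkeeping you flag as the main obstacle is in fact automatic, since $\det(\lambda I_N - L_s)=\lambda\det(\lambda I_{N-1}-L_{e_s})$ when $M=N-1$, so no rank argument is needed.
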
 

\begin{proof} By assumption, the graph $\mathcal{G}_{s}$ is a directed spanning tree, and is therefore SB. Then, it consists of $N-1$ edges, and $L_{e_{s}} \in \mathbb{R}^{(N-1)\times (N-1)}$. Since the nonzero eigenvalues of $L_{e_{s}}$ and $L_{{s}}$ coincide, the rank of $L_{e_{s}}$ is $N-1$ and $L_{e_{s}}$ has eigenvalues only with positive real parts, so $-L_{e_{s}}$ is Hurwitz. Then, given any $Q = Q^\top >0,\ Q \in \mathbb{R}^{(N-1) \times (N-1)}$, there exists a symmetric positive definite matrix $P \in \mathbb{R}^{(N-1) \times (N-1)}$ such that $PL_{e_{s}} + L_{e_{s}}^{\top}P = Q$ holds.\end{proof}

For a signed digraph containing a directed spanning tree, we have the following.

\begin{thm} \label{thm:lyap_eq_Le_directed} Let $\mathcal{G}_s$ be a signed digraph of $N$ agents interconnected by $M$ edges. Let $L_{e_{s}}$ be the associated directed edge Laplacian, which contains $\xi$ zero eigenvalues. Then, the following are equivalent:
	\renewcommand{\theenumi}{(\roman{enumi}}
	\begin{enumerate}
		\item $\mathcal{G}_s$ contains a directed spanning tree;
        \item For any $Q \in \mathbb{R}^{M\times M}, Q=Q^{\top}> 0$, and for any $\{ \alpha_{1}, \alpha_{2}, \dots, \alpha_{\xi}\}$ with $\alpha_{i} >0$, there exists a matrix $P(\alpha_{i}) \in \mathbb{R}^{M \times M}$, $P=P^{\top}>0$ such that 
\begin{equation}\label{Lyap_eq_dir}
    PL_{e_{s}} + L_{e_{s}}^{\top}P = Q - \sum_{i=1}^{\xi} \alpha_{i} \left(Pv_{r_{i}}v_{l_{i}}^{\top} + v_{l_{i}}v_{r_{i}}^{\top}P\right),
\end{equation}
where $v_{r_{i}}, v_{l_{i}} \in \mathbb{R}^M$ are, respectively, the normalized right and left eigenvectors of $L_{e_{s}}$ associated with the $i$th 0 eigenvalue. 
\end{enumerate}
$\xi$ satisfies $\xi = M-N+1$ if the signed digraph is SB, and also if it is SUB with a root node or SB-rooted SCC; otherwise, $\xi = M-N$.

\end{thm}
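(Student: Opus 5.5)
The plan is to reduce \eqref{Lyap_eq_dir} to a standard Lyapunov equation for a Hurwitz matrix. For the implication (i)$\Rightarrow$(ii), we define
\[
\bar L := L_{e_s} + \sum_{i=1}^{\xi}\alpha_i v_{r_i}v_{l_i}^\top ,
\]
with the eigenvectors normalized so that $v_{l_i}^\top v_{r_j}=\delta_{ij}$. With this notation, \eqref{Lyap_eq_dir} reads exactly $P\bar L+\bar L^\top P = Q$.

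It therefore suffices to show that $-\bar L$ is Hurwitz. Under (i), the nonzero eigenvalues of $L_{e_s}=E_s^\top E_{s\odot}$ coincide with those of $L_s=E_{s\odot}E_s^\top$. For a signed digraph with a directed spanning tree, these have positive real parts. Each zero eigenvalue of $L_s$ corresponds to a rooted component: one for an SB digraph or one with a root node or SB-rooted SCC, and none when the only rooted SCC is SUB. The remaining $\xi$ zero eigenvalues of $L_{e_s}$ come from $\ker E_{s\odot}$, which has dimension $M-\operatorname{rank}E_{s\odot}$. Counting ranks gives $\xi=M-N+1$ or $\xi=M-N$ according to the stated cases, and this settles the final claim.

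The key step, and the one I expect to be the main obstacle, is proving that the zero eigenvalue of $L_{e_s}$ is semisimple. We need geometric multiplicity equal to algebraic multiplicity, so that $L_{e_s}$ admits a Jordan decomposition $L_{e_s}=T\,\mathrm{diag}(0_{\xi},J_+)\,T^{-1}$. Here the first $\xi$ columns of $T$ are the $v_{r_i}$, and the first $\xi$ rows of $T^{-1}$ are the $v_{l_i}^\top$.

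I would first try to show $\ker L_{e_s}^2=\ker L_{e_s}$. Suppose $L_{e_s}^2 y=0$, and set $z=E_{s\odot}y$. Then $E_{s\odot}E_s^\top z$ is in the kernel of $E_s^\top$, so $L_s z\in\ker E_s^\top$. Under the spanning-tree condition, $\ker E_s^\top\cap\operatorname{range}L_s=\{0\}$, using the semisimplicity of the zero eigenvalue of $L_s$ and the gauge-transformation argument for SB graphs. Hence $L_s z=0$, and so $z\in\ker E_s^\top$ up to this same intersection argument. This gives $L_{e_s}y=0$.

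Once semisimplicity holds, $\bar L = T\,\mathrm{diag}(\alpha_1,\dots,\alpha_\xi,J_+)\,T^{-1}$. All its eigenvalues then have positive real part. Taking $P=\int_0^\infty e^{-\bar L^\top t}Qe^{-\bar L t}\,dt$ yields $P=P^\top>0$ solving the equation, and $P$ depends on the $\alpha_i$.

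For the converse (ii)$\Rightarrow$(i), I would argue by contradiction. Suppose the graph has no directed spanning tree, so there are several leader groups. Then $L_s$, and hence $L_{e_s}$, has more zero eigenvalues than the count allowed by the listed cases, or a zero eigenvalue that is not semisimple. If it is not semisimple, there is a generalized eigenvector $w$ with $L_{e_s}w=v_r$, and a chain that is not captured by the rank-one corrections. In either situation, some eigenvalue of $\bar L$ has zero real part. Multiplying \eqref{Lyap_eq_dir} on the left by $\bar w^{*}$ and on the right by the corresponding eigenvector $w$ then gives $0=w^*Qw>0$, which is a contradiction. Alternatively, if $\xi$ in (ii) is understood as the full number of zero eigenvalues, I would show that the dimension count in (ii) fails without a spanning tree, which also establishes the stated formula for $\xi$.
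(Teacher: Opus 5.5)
\textbf{The semisimplicity step fails.} Your overall route matches the paper's: shift the zero eigenvalues by $\sum_i\alpha_i v_{r_i}v_{l_i}^\top$, show the shifted matrix has spectrum in the open right half-plane, then solve a standard Lyapunov equation. But the step you single out as the key one is false. Under (i), the zero eigenvalue of $L_{e_s}$ need not be semisimple.

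Take $\varepsilon_1=(\nu_1,\nu_2)$, $\varepsilon_2=(\nu_2,\nu_1)$ and $\varepsilon_3=(\nu_1,\nu_3)$ cooperative, and $\varepsilon_4=(\nu_2,\nu_3)$ antagonistic. This graph has a spanning tree rooted at $\nu_1$, $\{\nu_1,\nu_2\}$ is an SB-rooted SCC, the graph is SUB, and $\xi=M-N+1=2$. With the paper's incidence matrices,
\[
L_{e_s}=\begin{bmatrix}1&-1&0&0\\-1&1&0&0\\0&-1&1&-1\\-1&0&-1&1\end{bmatrix},
\]
whose eigenvalues are $0,0,2,2$. Yet $\ker L_{e_s}=\mathrm{span}\{(0,0,1,1)^\top\}$ and $L_{e_s}(-1,-1,0,0)^\top=(0,0,1,1)^\top$, so there is a $2\times 2$ Jordan block at zero.

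Your argument breaks at ``$L_s z=0\Rightarrow z\in\ker E_s^\top$''. Here $\ker L_s=\mathrm{span}\{(1,1,0)^\top\}$ lies in $\mathrm{range}\,E_{s\odot}=\mathbb{R}^3$, but $E_s^\top(1,1,0)^\top=(0,0,1,1)^\top\neq 0$. The same happens in every SUB digraph whose rooted SCC is SB: there $\ker E_s^\top=\{0\}$ and $\operatorname{rank}E_{s\odot}=N$, so $\dim\ker L_{e_s}=M-N<\xi$. Consequently, $\xi$ biorthonormal pairs $v_{l_i}^\top v_{r_j}=\delta_{ij}$ built from genuine eigenvectors do not exist; in the example, the left and right zero eigenvectors are even orthogonal.

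\textbf{The missing idea.} Semisimplicity is not needed. Take $v_{r_i}$ and $v_{l_i}^\top$ to be the first $\xi$ columns of $T$ and the first $\xi$ rows of $T^{-1}$ in a Jordan basis, i.e.\ generalized eigenvectors. Then $L_{e_s}=T\,\mathrm{diag}(N_0,J_+)T^{-1}$ with $N_0$ strictly upper triangular, and $\bar L=T\,\mathrm{diag}\big(N_0+\mathrm{diag}(\alpha_1,\dots,\alpha_\xi),J_+\big)T^{-1}$ has eigenvalues $\alpha_1,\dots,\alpha_\xi$ together with those of $J_+$. Your integral formula for $P$ then goes through. This is what the paper does with the extra term $v_{r_1}v_{l_2}^\top$ in its Jordan decomposition, and in its later remark distinguishing $\xi_\phi=\gamma_\phi$ from $\xi_\phi>\gamma_\phi$.

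\textbf{Counting $\xi$.} The same example shows that $M-\operatorname{rank}E_{s\odot}$ (here $1$) is a geometric multiplicity, not $\xi$. Count instead with $\det(\lambda I_M-E_s^\top E_{s\odot})=\lambda^{M-N}\det(\lambda I_N-E_{s\odot}E_s^\top)$, which gives $\xi=M-N+{}$(multiplicity of $0$ in $L_s$).

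\textbf{The converse.} Your $w^*Qw$ computation is a cleaner version of the paper's argument. However, non-semisimplicity is irrelevant there, since it already occurs with a spanning tree. More seriously, the premise that losing the spanning tree always creates an extra zero eigenvalue is false when the extra leader groups are SUB-rooted SCCs. Take two disjoint directed $3$-cycles, each with one negative edge, both feeding a single follower. Then $L_s$ is nonsingular, so $L_{e_s}$ has exactly $M-N$ zero eigenvalues. Statement (ii) holds with the ``otherwise'' value of $\xi$, yet there is no spanning tree.

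The paper's converse makes the same unjustified assumption. So (ii)$\Rightarrow$(i) can only be proved when $l_1+l_{2_{SB}}$ exceeds what the stated count allows, for example for SB graphs. Your alternative reading, with $\xi$ the actual number of zero eigenvalues, cannot rescue it. By Gershgorin every nonzero eigenvalue of $L_s$ has positive real part, so under that reading (ii) holds for every signed digraph.
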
 
\begin{remark}
In the case the considered digraph contains $N$ edges and is SUB without a root node or without a SB-rooted SCC, \eqref{Lyap_eq_dir} holds with $\xi=0$, so is equivalent to \eqref{Lyap-eq}. 
\end{remark}

\begin{proof}
(i) $\Rightarrow$ (ii): By assumption, the graph \(\mathcal{G}_{s}\) contains a directed spanning tree. Then, from Lemmata 4 and 5 from \cite{Pelin_edge}, it follows that the eigenvalues of \(L_s\) depend on the structure of \(\mathcal{G}_{s}\). Moreover, from Item (iii) of \cite[Lemma 4]{Pelin_edge}, the nonzero eigenvalues of $L_{s}$ and $L_{e_{s}}$ coincide. If \(\mathcal{G}_{s}\) is SB or SUB with either a root node or an SB-rooted SCC, \(L_s\) has a unique zero eigenvalue, with the remaining \(N-1\) eigenvalues having positive real parts, i.e., $0 = \lambda_1 < \Re e (\lambda_2) \leq \dots \leq \Re e (\lambda_{N})$. If \(\mathcal{G}_{s}\) instead contains a SUB-rooted SCC, then all \(N\) eigenvalues have positive real parts, i.e., $0 < \Re e (\lambda_1) < \Re e (\lambda_2) \leq \dots \leq \Re e (\lambda_{N})$. $L_{e_{s}}$ has $\xi$ zero eigenvalues: $0 = \lambda_1 = \dots = \lambda_{\xi} < \Re e (\lambda_{\xi+1}) \leq \dots \leq \Re e (\lambda_{M})$, where $\xi = M - N + 1$ if the graph is SB or either contains either a root node or an SB-rooted SCC, and $\xi = M - N$, otherwise. Then, from Eq. (4) in \cite{Pelin_edge}, we can write the Jordan decomposition\footnote{This is the general Jordan decomposition of $L_{e_{s}}$. Based on the graph structure, some of the elements may not appear.} of $L_{e_{s}}$ as $L_{e_{s}} = U \Lambda U^{-1} = \sum_{i=1}^{\xi} \lambda_{i}(L_{e_{s}})v_{r_{i}}v_{l_{i}}^{\top} + v_{r_{1}}v_{l_{2}}^{\top} +U_{1} \Lambda_{1} U_{1}^{-1}$ where $ \Lambda_{1} \in \mathbb{C}^{(M - \xi) \times (M - \xi)}$ contains the Jordan blocks of $L_{e_{s}}$ corresponding to the eigenvalues with positive real parts. The matrix $U$ is given by $U = \left[ v_{r_{1}}, \cdots, v_{r_{{\xi}}}, U_{1} \right] \in \mathbb{C}^{M \times M}$, and its inverse is $U^{-1} = \left[ v_{l_{1}}^\top, \cdots, v_{l_{{\xi}}}^\top, U_{1}^\dagger \right]^\top \in \mathbb{C}^{M \times M}$. Here, \( v_{r_{i}} \) and \( v_{l_{i}} \) represent the right and left eigenvectors of \(L_{e_{s}}\), respectively. The matrices \(U_{1}\) and \(U_{1}^\dagger\) contain the remaining \(M-\xi\) columns of \(U\) and \(U^{-1}\), respectively. From \cite{Horn}, the inner product between eigenvectors corresponding to distinct eigenvalues is zero. Since $v_{r_{1}}$ and $v_{l_{2}}$ are associated with different eigenvalues, we have $v_{l_{2}}^{\top}v_{r_{1}}=0$. Consequently, the matrix $v_{r_{1}}v_{l_{2}}^{\top}$, whenever present, has only zero eigenvalues, and has no effect. Then, for any $\{ \alpha_{1}, \alpha_{2}, \dots, \alpha_{{\xi}}\}$ with $\alpha_{i} >0$, define $R = L_{e_{s}} + \sum_{i=1}^{\xi} \alpha_{i}v_{r_{i}}v_{l_{i}}^{\top}$. From this decomposition and the fact that $\Lambda_{1}$ contains only eigenvalues with positive real parts, it is clear that $ \Re e \{\lambda_{j}(R) \} > 0$ for all $j \leq M$. That is, $-R$ is Hurwitz; therefore, for any ${Q = Q^{\top}>0}$ and $\alpha_{i} >0, i \leq \xi$, there exists $P = P^{\top}>0$ such that $-PR-R^{\top}P = -Q$. This is equivalent to
$		-P [ L_{e_{s}} + \sum_{i=1}^{\xi} \alpha_{i}v_{r_{i}}v_{l_{i}}^{\top} ] - [L_{e_{s}} + \sum_{i=1}^{\xi} \alpha_{i}v_{r_{i}}v_{l_{i}}^{\top} ]^{\top}P = -Q
	\Leftrightarrow PL_{e_{s}} + L_{e_{s}}^{\top}P = Q - \sum_{i=1}^\xi \alpha_i (Pv_{r_i}v_{l_i}^{\top} + v_{l_i}v_{r_i}^{\top}P).$

	(ii) $\Rightarrow$ (i): Assume that the signed edge Laplacian has $\xi+1$ zero eigenvalues and the rest of its $M-\xi-1$ eigenvalues have positive real parts. In view of Lemmata 4 and 5 from \cite{Pelin_edge}, neither the property that $L_{e_{s}}$ has $\xi$ eigenvalues with positive real parts ---with $\xi = M - N + 1$ if the graph is SB or contains a root node or a SB-rooted SCC, and $\xi = M - N$, otherwise--- nor the fact that the graph contains a directed spanning tree hold. Now, the Jordan decomposition of $L_{e_{s}}$ has the form $L_{e_{s}} = U \Lambda U^{-1} = \sum_{i=1}^{\xi+1} \lambda_{i}(L_{e_{s}})v_{r_{i}}v_{l_{i}}^{\top} + v_{r_{1}}v_{l_{2}}^{\top} + U_{1} \Lambda_{1} U_{1}^{-1}$ with $U = \left[ v_{r_{1}}, \dots, v_{r_{{\xi +1}}}, U_{1} \right] \in \mathbb{R}^{M \times M}$ and $U^{-1} = \left[ v_{l_{1}}, \dots, v_{l_{{\xi +1}}}, U_{1}^{\dagger} \right]^\top \in \mathbb{R}^{M \times M}$. Let us consider $R(\alpha_{i}) =L_{e_{s}} + \sum_{i=1}^{\xi} \alpha_{i}v_{r_{i}}v_{l_{i}}^{\top}$ which admits the Jordan decomposition $R(\alpha_{i}):= U \Lambda_{R} U^\top$, where 
	$$
	\Lambda_{R}(\alpha_{i}) := \begin{bmatrix} \alpha_{1} &  &  & & \\[-5pt]  &\hspace{-1.3ex} \ddots & & & \\[-5pt]  &  &\hspace{-1ex} \alpha_{{\xi}} & & \\[1pt]   &  & &\hspace{-1em} 0 & \\[-1pt]  & &  & &\hspace{-1ex}  \Lambda_{1} \end{bmatrix}.
	$$
    Since we assume that $L_{e_s}$ has $\xi+1$ zero eigenvalues, $R$ has one zero eigenvalue, and this cannot be positive definite. Then, there exists a matrix $Q = Q^{\top}>0$ for which no symmetric matrix $P=P^{\top}$ satisfies $-PR-R^{\top}P = -Q$. This contradicts statement (ii), completing the proof. \end{proof}

For a signed digraph containing multiple leaders, we have the following.
\begin{thm} \label{proposition:lyap_eq} Let $\mathcal{G}_s$ be a signed digraph of $N$ agents interconnected by $M$ edges, containing $m$ leader groups. Let $l_1$ be the number of root nodes, $l_{2_{SB}}$ be the number of SB-rooted SCCs, and $l_{2_{SUB}}$ be the number of SUB-rooted SCCs, where $m=l_1 + l_2$ and $l_2 = l_{2_{SB}} +l_{2_{SUB}}$. Let $L_{e_{s}}$ be the associated directed edge Laplacian. Then the following are equivalent:
\renewcommand{\theenumi}{(\roman{enumi}}
\begin{enumerate}
	\item the graph has $m$ leader groups, and given each follower $\nu_j, \forall j \in \mathcal{F}$, there exists at least one leader $\nu_i, i \in \mathcal{L}$, such that there exists at least one path from $\nu_i$ to $\nu_j$,
	\item for any $Q \in \mathbb{R}^{M \times M}, Q=Q^{\top}> 0$ and for any $\{ \alpha_1, \alpha_2, \dots, \alpha_{\xi}\}$ with $\alpha_i >0$, there exists a matrix $P(\alpha_i) \in \mathbb{R}^{M \times M}$, $P=P^{\top}>0$ such that \eqref{Lyap_eq_dir} holds,
where $v_{r_i}, v_{l_i} \in \mathbb{R}^M$ are the right and left eigenvectors of $L_{e_{s}}$ associated with the $i$th 0 eigenvalue.
\end{enumerate}
Moreover, $\xi$ satisfies $\xi = M-N+l_1+l_{2_{SB}}$ whether the signed digraph is SB or SUB.
\end{thm}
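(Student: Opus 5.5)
We follow the structure of the proof of Theorem~\ref{thm:lyap_eq_Le_directed}: reduce statement (ii) to $-R$ being Hurwitz, where $R := L_{e_s} + \sum_{i=1}^{\xi}\alpha_i v_{r_i}v_{l_i}^\top$. The first step is spectral. We show that, under item (i), $L_s$ has exactly $l_1 + l_{2_{SB}}$ zero eigenvalues and all other eigenvalues have positive real parts. For this we order the nodes so that $L_s$ is block lower-triangular. The diagonal blocks are the Laplacians of the leader groups, followed by a follower block $L_{ff}$ that carries the couplings coming from the leaders.

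Each leader block is analysed separately:
\begin{itemize}
\item a root node contributes a scalar zero block, hence one zero eigenvalue;
\item an SB-rooted SCC is strongly connected and SB, so by the gauge transformation $D L_s D$ with $D=\mathrm{diag}(\pm1)$ it becomes an unsigned Laplacian with exactly one zero eigenvalue;
\item a SUB-rooted SCC has a Laplacian whose eigenvalues all have positive real parts, by Lemmata 4 and 5 of \cite{Pelin_edge}.
\end{itemize}
For the follower block, every follower is reachable from some leader by item (i). We therefore argue, as in standard containment analysis, that $L_{ff}$ is a nonsingular (possibly signed) $M$-type matrix whose eigenvalues have positive real parts. In the signed case we do this by comparing it with $\lvert L_{ff}\rvert$ through a diagonal dominance and Gershgorin argument along the reachability order.

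The second step transfers this count to the edge Laplacian. Item (iii) of \cite[Lemma 4]{Pelin_edge} says that the nonzero eigenvalues of $L_s=E_{s\odot}E_s^\top$ and $L_{e_s}=E_s^\top E_{s\odot}$ coincide. Hence $L_{e_s}$ has $N-(l_1+l_{2_{SB}})$ eigenvalues with positive real part, and all its remaining eigenvalues are zero, which gives $\xi = M-N+l_1+l_{2_{SB}}$. The same count holds whether the whole graph is SB or SUB, because only the leader blocks decide the kernel dimension.

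With this in hand, (i)$\Rightarrow$(ii) is verbatim the argument of Theorem~\ref{thm:lyap_eq_Le_directed}. We write the Jordan decomposition $L_{e_s}=U\Lambda U^{-1}$ with the zero-eigenvalue eigenvectors $v_{r_i}, v_{l_i}$ normalized so that $v_{l_i}^\top v_{r_j}=\delta_{ij}$. Adding $\sum_i \alpha_i v_{r_i}v_{l_i}^\top$ replaces the zero eigenvalues by the $\alpha_i>0$ and leaves $\Lambda_1$ unchanged. Thus $-R$ is Hurwitz, and the Lyapunov equation $PR+R^\top P=Q$ has a solution $P>0$, which rearranges to \eqref{Lyap_eq_dir}.

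For (ii)$\Rightarrow$(i) we argue by contraposition. Suppose item (i) fails: some follower is not reachable from any leader, or there are more or fewer leader groups than counted. Then the block-triangular analysis yields an extra zero eigenvalue of $L_s$, typically from an unreached component that forms its own SB closed class. It may also yield a different kernel dimension. Either way $L_{e_s}$ has at least $\xi+1$ zero eigenvalues. Then $R$ retains a zero eigenvalue, so $-R$ is not Hurwitz, and for every $Q>0$ no $P>0$ can solve $PR+R^\top P = Q$. Indeed, $x^\top Q x = 2x^\top P R x = 0$ for $x\in\ker R$, a contradiction.

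The main obstacle is the first step, namely showing that the follower block of the \emph{signed} Laplacian has only eigenvalues with positive real part, and hence pinning the zero count exactly to $l_1+l_{2_{SB}}$. This requires care because $L_{ff}$ need not be a Z-matrix when edges are antagonistic and the graph is SUB. We would first try to dominate it by the unsigned grounded Laplacian. If that fails, we would rely on the strong-component-wise reduction from \cite{Pelin_edge} and the fact that each follower SCC receives at least one input edge, which makes its block strictly diagonally dominant in some row and irreducibly so.
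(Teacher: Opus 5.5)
Your proof of (i)$\Rightarrow$(ii) is correct and has the same skeleton as the paper's. You count the zero eigenvalues of $L_s$ and transfer the count to $L_{e_s}$ through the common nonzero spectrum of $E_{s\odot}E_s^\top$ and $E_s^\top E_{s\odot}$. You then shift the zero block of the Jordan form by $\sum_i\alpha_i v_{r_i}v_{l_i}^\top$ and solve the Lyapunov equation for the Hurwitz matrix $-R$.

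The real difference is the first step. The paper simply invokes \cite[Lemma~7]{Pelin_edge} for the count $l_1+l_{2_{SB}}$. You derive it from the SCC block-triangular form of $L_s$, which shows why only root nodes and SB-rooted SCCs produce zeros, and why $\xi$ does not depend on global structural balance.

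The step you flag as the main obstacle is closed by either of your two suggestions, and no Z-matrix structure is needed:
\begin{itemize}
\item The comparison matrix of $L_{ff}$ is the unsigned grounded Laplacian. Under the reachability in (i) it is a nonsingular M-matrix, so $L_{ff}$ is an H-matrix with positive diagonal and hence positive stable.
\item Equivalently, each follower-SCC diagonal block is irreducible and weakly row-dominant, since its diagonal entry is the full in-degree $\sum_k\lvert a_{ik}\rvert$. It is strictly dominant in any row with an in-neighbour outside the block, and such a row exists because a non-source SCC receives an edge. Taussky's theorem gives nonsingularity, and Gershgorin finishes, since the discs $\lvert z-\ell_{ii}\rvert\le\ell_{ii}$ meet the imaginary axis only at $0$.
\end{itemize}

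Your biorthogonal normalization $v_{l_i}^\top v_{r_j}=\delta_{ij}$ also handles size-two Jordan blocks at $0$ (Case~2 of the paper's remark) more cleanly than the paper's extra nilpotent terms. The matrix $U\,\mathrm{diag}(\alpha_1,\dots,\alpha_\xi,0,\dots,0)\,U^{-1}$ makes the zero block upper triangular with diagonal entries $\alpha_i$. You should say explicitly, though, that some of the $v_{r_i},v_{l_i}$ are then generalized eigenvectors.

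The converse is not right as written. Your claim that ``either way $L_{e_s}$ has at least $\xi+1$ zero eigenvalues'' fails in two of your own cases:
\begin{itemize}
\item If the unreached nodes descend from a source component that is SUB, no zero eigenvalue is added.
\item If there are fewer root nodes or SB-rooted SCCs, there are fewer zeros, not more.
\end{itemize}
In fact these scenarios cannot arise. Leader groups are exactly the source components of the condensation of $\mathcal{G}_s$, and every other node lies downstream of one of them. So once $\mathcal{G}_s$ contains the stated $m$ leader groups, (i) holds automatically and (ii)$\Rightarrow$(i) is immediate. The paper's converse defers to the ``$\xi+1$ zero eigenvalues'' step of Theorem~\ref{thm:lyap_eq_Le_directed} and has the same looseness, so you are not departing from it. You should still state the converse this way rather than through a spectral count.
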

\begin{proof}
(i) $\Rightarrow$ (ii): By assumption, the graph \(\mathcal{G}_{s}\) contains multiple leader groups, as defined by Assumption \ref{ass2}. Then, from Lemma 7 in \cite{Pelin_edge}, it follows that $L_s$ has $l_1+l_{2_SB}$ zero eigenvalues, with the remaining eigenvalues having positive real parts, i.e., $0 = \lambda_1 = \dots = \lambda_{l_1+l_{2_SB}} < \Re e (\lambda_{l_1+l_{2_SB}+1}) \leq \dots \leq \Re e (\lambda_{N})$. Since the nonzero eigenvalues of $L_{s}$ and $L_{e_{s}}$ coincide, $L_{e_{s}}$ has $\xi$ zero eigenvalues: $0 = \lambda_1 = \dots = \lambda_{\xi} < \Re e (\lambda_{\xi+1}) \leq \dots \leq \Re e (\lambda_{M})$, where $\xi = M-N+l_1+l_{2_{SB}}$.  Let $\gamma$ and $\xi$ be the geometric and algebraic multiplicities of the zero eigenvalues. Then, from Eq. (4) in \cite{Pelin_edge}, we can write the Jordan decomposition of $L_{e_{s}}$ as $L_{e_{s}} = U \Lambda U^{-1} = \sum_{i=1}^{\xi} \lambda_{i}(L_{e_{s}})v_{r_{i}}v_{l_{i}}^{\top} + \sum_{i=1}^{\xi-\gamma} v_{r_{2i-1}}v_{l_{2i}}^{\top} +U_{1} \Lambda_{1} U_{1}^{-1}$. On the other hand, from \cite{Horn} the inner product between eigenvectors corresponding to distinct eigenvalues is zero. Since $v_{r_{2i-1}}$ and $v_{l_{2i}}$ are associated with different eigenvalues, we have $v_{r_{2i-1}}^{\top}v_{l_{2i}}=0$ for $i \in \{1,2,\dots, 2(\xi-\gamma)\}$. Consequently, the matrix $v_{r_{2i-1}}v_{l_{2i}}^{\top}$ has only zero eigenvalues. The proof follows by the steps of the proof of Theorem \ref{thm:lyap_eq_Le_directed}.
\end{proof}

\section{Main Results}\label{section5}
This section presents our main results. We consider the synchronization problem of OMAS modeled by \eqref{FO} over dynamic directed signed graphs, and establish the practical asymptotic stability of the synchronization errors \eqref{def_e_bar}. First, we derive the synchronization errors using the edge-Laplacian notation introduced in Section \ref{section3}, and then reformulate the control problem accordingly. 

We now proceed to derive the synchronization errors using the edge-Laplacian notation from Section \ref{section3}. At each mode $\phi \in \mathcal{P}$, using the definition of the incidence matrix, we may express the edge states in \eqref{def_e} in vector form
\begin{align}\label{err_vect}
	e_{\phi} = E_{s_{\phi}}^\top x,
\end{align}
where $E_{s_{\phi}}$ is the incidence matrix corresponding to the signed digraph at mode $\phi$. From the definition of the directed Laplacian, we write the control law in \eqref{CL} in vector form as
\begin{align}\label{CL_vect_edge}
	u_{\phi} = -k_1  E_{{s \odot}_{\phi}} {e}_{{\phi}}.
\end{align}
Differentiating the edge states \eqref{err_vect} yields
\begin{align}\label{edgestates}
	\dot{e}_{{\phi}} = -k_1 E_{s_{\phi}}^\top E_{{s \odot}_{\phi}} e_{\phi} =  -k_1 L_{e_{s_{\phi}}} e_{\phi}.
\end{align}
Similarly, differentiating the synchronization errors \eqref{def_e_bar},
$	\dot{\bar{e}}_{{\phi}} = [I -\sum_{i=1}^{\xi_{\phi}} v_{r_{\phi_i}}v_{l_{\phi_i}}^{\top} ] \dot{e}_{{\phi}} = -k_1 [I -\sum_{i=1}^{\xi_{\phi}} v_{r_{\phi_i}}v_{l_{\phi_i}}^{\top} ] L_{e_{s_{\phi}}}{e}_{{\phi}},$
we obtain
\begin{equation}\label{27}
	\dot{\bar{e}}_{{\phi}}= -k_1 L_{e_{s_{\phi}}} \bar{e}_{{\phi}}.
\end{equation}
\begin{remark}
Observe that, to obtain \eqref{27}, we distinguish two cases based on \cite[Table 1, Eq. (5)]{Pelin_edge}: the first case where $\xi_{\phi} = \gamma_{\phi}$, and the second where $\xi_{\phi}>\gamma_{\phi}$. Here, $\gamma_{\phi}$ and $\xi_{\phi}$ denote the geometric and algebraic multiplicities of the zero eigenvalue in mode $\phi \in \mathcal{P}$, respectively.
\begin{itemize}
    \item Case 1 ( $\xi_{\phi} = \gamma_{\phi}$ ): In this case, all left eigenvectors satisfy $v_{l_{\phi_i}}^{\top}L_{e_{s_\phi}}=0.$ Consequently, we have $\dot{\bar{e}}_{{\phi}} = -k_1L_{e_{s_{\phi}}} e_{{\phi}}.$ Additionally, since $L_{e_{s_\phi}}v_{r_{\phi_i}}=0$, it also holds that $\dot{\bar{e}}_{{\phi}} = -k_1L_{e_{s_{\phi}}}[I -\sum_{i=1}^{\xi_{\phi}} v_{r_{\phi_i}}v_{l_{\phi_i}}^{\top} ]{e}_{{\phi}}.$ Finally, substituting \eqref{def_e_bar} in the latter, we obtain \eqref{27}.
    \item Case 2 ( $\xi_{\phi}>\gamma_{\phi}$ ): For $i \in \{1, 2, \dots, \xi_{\phi}-\gamma_{\phi} \}$, we have the relations $v_{l_{\phi_{2i-1}}}^{\top}L_{e_{s_\phi}}=v_{l_{\phi_{2i}}}^{\top},$ $v_{l_{\phi_{2i}}}^{\top}E_{s_{\phi}}^\top=0,$ and $L_{e_{s_\phi}}v_{r_{\phi_{2i}}}=v_{r_{\phi_{2i-1}}}$, $E_{s_{\phi}}v_{r_{\phi_{2i-1}}}=0$. The remaining $2\gamma_\phi - \xi_{\phi}$ right and left eigenvectors associated with zero eigenvalues satisfy $L_{e_{s_\phi}}v_{r_{\phi_i}}=0$ and $v_{l_{\phi_i}}^{\top}L_{e_{s_\phi}}=0.$ Thus, for this case as well, $\dot{\bar{e}}_{{\phi}} = -k_1L_{e_{s_{\phi}}}[I -\sum_{i=1}^{\xi_{\phi}} v_{r_{\phi_i}}v_{l_{\phi_i}}^{\top} ]{e}_{{\phi}}$ holds, and substituting \eqref{def_e_bar} in the latter, Eq. \eqref{27} follows.
\end{itemize}
\end{remark}
 The closed-loop system for the synchronization error dynamics is given as
\begin{subequations}\label{dyn_edge1}
\begin{align}
  &\dot{\bar e}_{{\phi}}(t) = -k_1 L_{e_{s_{\phi}}} \bar e_{{\phi}}(t),\quad t \in [t_l, t_{l+1}),\label{dyn_edge}\\
  &\bar e_{{\phi}}(t_l^+) = \Xi_{\phi, \hat \phi} \bar e_{{\hat \phi}}(t_l^-) + \Phi_l,\quad t = t_l. \label{edge_switch}
\end{align}
\end{subequations}
The equation in \eqref{edge_switch} formulates the edge state transition of the switched system at each switching instant \( t_l \), as described in \cite{xue2022stability}, where $\phi = \sigma(t) \in \mathcal{P}$ represents the active mode after switching, with $t \in [t_l, t_{l+1})$, while $\hat \phi = \sigma(\hat t) \in \mathcal{P}$ denotes the previous mode, where $\hat t \in [t_{l-1}, t_l)$. The matrix $\Xi_{\phi, \hat \phi} \in \mathbb{B}^{M_{\phi} \times M_{\hat \phi}}$ is a matrix with elements in $\{ 0, 1\}$ that determines how the dimension of the error state \( \bar e_{{\hat{\phi}}}(t_l^-) \) transitions between modes. Here, \( \bar e_{{\hat{\phi}}}(t_l^-) \) represents the error states just before the switching instant, while \( \bar e_{{\phi}}(t_l^+) \) represents the error states just after the switching instant. $\Phi_l \in \mathbb{R}^{M_{\phi}}$ is a real-valued and bounded vector that captures instantaneous changes in the edge state $e_{{\phi}}$ at the switching moment $t_l$. These changes may arise from the addition or removal of a node, which leads to the creation or deletion of an edge, or from changes in the sign or direction of interactions, which can cause switches between cooperation and antagonism, as well as shifts in leader–follower roles--see Figure \ref{graph}.

We are now ready to present our main result on practical asymptotic stability of the origin of the error dynamics \eqref{dyn_edge1}. The control objective for \eqref{dyn_edge1} is to ensure that the origin is uniformly practically stable. 
\begin{thm}\label{prop:result1}
Consider the signed OMAS \eqref{FO}, under Assumptions \ref{standing_ass} and \ref{ass1}, in closed loop with the switching control law \eqref{CL_vect_edge}. Let $\phi, \hat \phi \in \mathcal{P}$ be any two consecutive modes, where $\hat \phi$ precedes $\phi$. Let $P_{\phi}$ be generated by \eqref{Lyap-eq} if the signed digraph is a directed spanning tree, and by \eqref{Lyap_eq_dir} otherwise, where $\lambda_{\max}(P_{\phi})$ and $\lambda_{\min}(P_{\phi})$ the largest and the smallest eigenvalue of matrix $P_{\phi}$. 

Then, if the switching signal $\sigma$ admits an average dwell time satisfying
	\begin{equation}\label{cond}
		\tau_{\phi,\hat{\phi}} \geq \frac{\ln(\Omega_{\phi,\hat{\phi}})}{\gamma_{\phi}},
	\end{equation} 
	where $\Omega_{\phi,\hat{\phi}}=1 + 2 \frac{\lambda_{\max}(P_{\phi})}{\lambda_{\min}(P_{\hat \phi})}$ and $\gamma_{\phi}= k_1 \frac{1}{\lambda_{\max}(P_{\phi})}$ are positive constants, the origin of the closed-loop system \eqref{dyn_edge1} is uniformly practically stable for all initial conditions.
\end{thm}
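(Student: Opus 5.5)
The plan is to verify the three hypotheses of Theorem~\ref{thmxue} for the switched error system \eqref{dyn_edge1}, with the mode-dependent quadratic Lyapunov function
\[
V_\phi(\bar e_\phi)=\bar e_\phi^\top P_\phi \bar e_\phi ,
\]
where $P_\phi$ comes from Theorem~\ref{lemma:lyap_eq_spanningtree} (spanning-tree case) or from Theorems~\ref{thm:lyap_eq_Le_directed}/\ref{proposition:lyap_eq} (general case). The choice is made with $Q=I$, so that the constants match those in the statement. The dwell-time condition \eqref{cond} is exactly the one required by Theorem~\ref{thmxue}, so practical stability then follows directly.

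\emph{Step 1 (sandwich bounds).} Since $P_\phi=P_\phi^\top>0$, we have
\[
\lambda_{\min}(P_\phi)|\bar e_\phi|^2\le V_\phi\le\lambda_{\max}(P_\phi)|\bar e_\phi|^2 .
\]
Taking the minimum and maximum over the (finite, or uniformly bounded) set of modes gives common class $\mathcal K_\infty$ functions $\underline\kappa,\overline\kappa$.

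\emph{Step 2 (flow decrease).} Along \eqref{dyn_edge},
\[
\dot V_\phi=-k_1\bar e_\phi^\top(P_\phi L_{e_{s_\phi}}+L_{e_{s_\phi}}^\top P_\phi)\bar e_\phi .
\]
In the spanning-tree case this equals $-k_1|\bar e_\phi|^2$. In the general case, substituting \eqref{Lyap_eq_dir} gives
\[
\dot V_\phi=-k_1|\bar e_\phi|^2+2k_1\sum_i\alpha_i(\bar e_\phi^\top P_\phi v_{r_i})(v_{l_i}^\top\bar e_\phi).
\]
The key observation is that $v_{l_i}^\top\bar e_\phi=0$. Indeed, $\bar e_\phi=[I-\sum_j v_{r_j}v_{l_j}^\top]e_\phi$, and by biorthonormality of the eigenvector pairs in the Jordan decomposition ($v_{l_i}^\top v_{r_j}=\delta_{ij}$) the bracket is the spectral projector onto the complement of the zero-eigenvalue generalized eigenspace. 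This set is invariant under \eqref{27}, so the cross terms vanish. Hence
\[
\dot V_\phi\le -k_1|\bar e_\phi|^2\le -\frac{k_1}{\lambda_{\max}(P_\phi)}V_\phi=-\gamma_\phi V_\phi .
\]

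\emph{Step 3 (jump bound).} From \eqref{edge_switch}, use $|a+b|^2\le 2|a|^2+2|b|^2$ together with $|\Xi_{\phi,\hat\phi}|\le 1$. The latter holds because $\Xi$ is a $0/1$ selection matrix whose rows and columns each contain at most one $1$, so the spectral norm is at most one. This yields
\[
V_\phi(t_l^+)\le\lambda_{\max}(P_\phi)\big(2|\bar e_{\hat\phi}(t_l^-)|^2+2|\Phi_l|^2\big)
\le \frac{2\lambda_{\max}(P_\phi)}{\lambda_{\min}(P_{\hat\phi})}V_{\hat\phi}(t_l^-)+2\lambda_{\max}(P_\phi)\sup_l|\Phi_l|^2 .
\]
The coefficient is bounded above by $\Omega_{\phi,\hat\phi}=1+2\lambda_{\max}(P_\phi)/\lambda_{\min}(P_{\hat\phi})>1$. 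We set $\Theta:=2\sup_\phi\lambda_{\max}(P_\phi)\sup_l|\Phi_l|^2$, which is finite by boundedness of $\Phi_l$ and of the mode set. All hypotheses of Theorem~\ref{thmxue} then hold with \eqref{cond}, which gives uniform practical stability of the origin of \eqref{dyn_edge1}.

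The main obstacle is Step 2 in the non-tree case. It requires showing rigorously that $\bar e_\phi$ stays in the range of the spectral projector, i.e.\ that $v_{l_i}^\top\bar e_\phi\equiv0$. When $\xi_\phi>\gamma_\phi$ (the nontrivial Jordan blocks of Remark after \eqref{27}), one needs the chosen $v_{r_i},v_{l_i}$ to be a biorthonormal basis of the generalized eigenspaces, rather than mere eigenvectors. I would first establish that $\Pi=\sum_i v_{r_i}v_{l_i}^\top$ is the Riesz projector, hence idempotent and commuting with $L_{e_{s_\phi}}$, and deduce $v_{l_i}^\top(I-\Pi)=0$.

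A secondary subtlety is the uniformity of $\underline\kappa,\overline\kappa$ and $\Theta$ when infinitely many modes are allowed. I would handle this by assuming bounded $N_\phi$, which leaves only finitely many distinct graphs.
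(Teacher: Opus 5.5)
Your proposal is correct and follows the paper's proof essentially step for step. The paper also builds a quadratic Lyapunov function from \eqref{Lyap-eq}/\eqref{Lyap_eq_dir} with $Q_\phi=I$, cancels the $\alpha_i$ cross terms because $v_{l_i}^\top\bar e_\phi=0$, gets the decay rate $\gamma_\phi=k_1/\lambda_{\max}(P_\phi)$, bounds the jump using $|a+b|^2\le 2|a|^2+2|b|^2$ and $|\Xi_{\phi,\hat\phi}|\le 1$, and then invokes Theorem~\ref{thmxue}. Dropping the factor $\tfrac12$ in $V_\phi$ is immaterial, and your explicit use of biorthonormality ($v_{l_i}^\top v_{r_j}=\delta_{ij}$) and of a mode-uniform $\Theta$ only makes precise what the paper uses implicitly.
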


\begin{proof} For each mode $\phi$, let $Q_{\phi} = Q_{\phi}^\top > 0$ and $\alpha_{\phi}$ be arbitrarily fixed. By Assumption \ref{ass1} and Theorem \ref{thm:lyap_eq_Le_directed}, there exists a matrix $P_{\phi} = P_{\phi}^\top > 0$ such that \eqref{Lyap_eq_dir} holds, or if the considered graph is a spanning tree, such that \eqref{Lyap-eq} holds. In the case the digraph contains multiple leader groups, by Theorem \ref{proposition:lyap_eq}, there exists a $P_{\phi} = P_{\phi}^\top > 0$ such that \eqref{Lyap_eq_dir} holds. Consider the Lyapunov function candidate
	\begin{align}\label{V}
		V_{\phi}(\bar e_{{\phi}}) := \frac{1}{2} \bar e_{{\phi}}^\top P_{\phi} \bar e_{{\phi}}.
	\end{align}
	For all $\tau \in [t_l, t_{l+1})$, the derivative of \eqref{V} along the trajectories of \eqref{dyn_edge} yields
	$\dot V_{\phi}(\bar e_{{\phi}}(\tau)) = - k_1 \bar e_{{\phi}}(\tau)^\top P_{\phi} L_{e_{s_{\phi}}} \bar e_{{\phi}}(\tau).$
	If the underlying signed digraph is a directed spanning tree, using \eqref{Lyap-eq} with $Q_{\phi} = I_{M_{\phi} \times M_{\phi}}$, we obtain
		$\dot V_{\phi}(\bar e_{{\phi}}(\tau)) = - \frac{1}{2} k_1 \bar e_{{\phi}}(\tau)^\top \bar e_{{\phi}}(\tau). $
	If the underlying signed digraph contains a directed spanning tree or multiple leader groups, using \eqref{Lyap_eq_dir} with $Q_{\phi} = I_{M_{\phi} \times M_{\phi}}$, we obtain
$		\dot V_{\phi}(\bar e_{{\phi}}(\tau)) = - \frac{1}{2} k_1 \bar e_{{\phi}}(\tau)^\top \bar e_{{\phi}}(\tau) 
		 + \frac{1}{2} \bar e_{{\phi}}(\tau)^\top \left( \sum_{i=1}^{\xi_{\phi}} \alpha_{\phi_i} (P_{\phi}v_{r_{\phi_i}}v_{l_{\phi_i}}^{\top} + v_{l_{\phi_i}}v_{r_{\phi_i}}^{\top}P_{\phi}) \right) \bar e_{{\phi}}(\tau).$ 
         From the definition of the synchronization errors \eqref{def_e_bar} and the fact that $v_{l_{\phi_i}}^{\top}v_{r_{\phi_i}}=1$, we have $		\sum_{i=1}^{\xi_{\phi}} \alpha_{\phi_i} P_{\phi}v_{r_{\phi_i}}v_{l_{\phi_i}}^{\top} [I - \sum_{i=1}^{\xi_{\phi}} v_{r_{\phi_i}}v_{l_{\phi_i}}^{\top}]e_{{\phi}}
        = \sum_{i=1}^{\xi_{\phi}} \alpha_{\phi_i} P_{\phi}v_{r_{\phi_i}}v_{l_{\phi_i}}^{\top}e_{{\phi}} - \sum_{i=1}^{\xi_{\phi}} \alpha_{\phi_i} P_{\phi}v_{r_{\phi_i}}v_{l_{\phi_i}}^{\top} e_{{\phi}} = 0$
    and $	\bar e_{{\phi}}^\top [I - \sum_{i=1}^{\xi_{\phi}} v_{r_{\phi_i}}v_{l_{\phi_i}}^{\top}]^\top \sum_{i=1}^{\xi_{\phi}}\alpha_{\phi_i} v_{l_{\phi_i}}v_{r_{\phi_i}}^{\top}P_{\phi}
        = \sum_{i=1}^{\xi_{\phi}} e_{{\phi}}^\top  \alpha_{\phi_i} v_{l_{\phi_i}}v_{r_{\phi_i}}^{\top}P_{\phi} - \sum_{i=1}^{\xi_{\phi}} e_{{\phi}}^\top  \alpha_{\phi_i} v_{l_{\phi_i}}v_{r_{\phi_i}}^{\top}P_{\phi} = 0.$
    Thus,
	$	\dot V_{\phi}(\bar e_{{\phi}}(\tau)) = - \frac{1}{2} k_1 \bar e_{{\phi}}(\tau)^\top \bar e_{{\phi}}(\tau),$
    in both cases. From the definition of $V_{\phi}(\bar e_{{\phi}})$ and Rayleigh theorem \cite[Theorem 4.2.2]{Horn}, we have $\frac{1}{2} \lambda_{\min}(P_{\phi}) \lvert \bar e_{{\phi}}(\tau) \rvert^2 \leq V_{\phi}(\bar e_{{\phi}}(\tau)) \leq \frac{1}{2} \lambda_{\max}(P_{\phi}) \lvert \bar e_{{\phi}}(\tau)\rvert^2,$
	which gives $\lvert \bar e_{{\phi}}(\tau)\rvert^2 \geq \frac{2 V_{\phi}(\bar e_{{\phi}}(\tau))}{\lambda_{\max}(P_{\phi})}.$
	Consequently, given $\gamma_{\phi} = k_1 \frac{1}{\lambda_{\max}(P_{\phi})}$, we obtain
	\begin{align}\label{cond1}
		\dot V_{\phi}(\bar e_{{\phi}}(\tau)) \leq -\gamma_{\phi} V_{\phi}(\bar e_{{\phi}}(\tau)).
	\end{align}
	
	Now, let $\phi, \hat \phi \in \mathcal{P}$ be two consecutive modes and $        V_{\phi}(\bar e_{{\phi}}(t_l^+)):= \frac{1}{2} \bar e_{{\phi}}(t_l^+)^\top P_{\phi} \bar e_{{\phi}}(t_l^+)$, $V_{\hat \phi}(\bar e_{{\hat \phi}}(t_l^-)) := \frac{1}{2} \bar e_{{\hat \phi}}(t_l^-)^\top P_{\hat \phi} \bar e_{{\hat \phi}}(t_l^-).$
    It follows from \eqref{edge_switch} that for any $t_l$, $\lvert \bar e_{{\phi}}(t_l^+) \rvert^2 \leq 2\vert \Xi_{\phi, \hat \phi} \rvert^2 \lvert \bar e_{{\hat \phi}}(t_l^-) \rvert^2 + 2\lvert \Phi_l\rvert^2.$
	Noting that $\vert \Xi_{\phi, \hat \phi} \rvert \equiv 1$, because it is a submatrix of an identity matrix, and the singular values of an identity matrix are all 1, meaning that the largest singular value is also 1,
$		V_{\phi}(\bar e_{{\phi}}(t_l^+)) \leq \frac{1}{2}\lvert P_{\phi} \rvert \lvert \bar e_{{\phi}}(t_l^+) \rvert^2 \leq \lvert P_{\phi} \rvert \lvert \bar e_{{\hat \phi}}(t_l^-) \rvert^2 + \lvert P_{\phi} \rvert \lvert \Phi_l\rvert^2.$
	Since $\lvert \bar e_{{\phi}}(t_l^-)\rvert^2 \leq \frac{2 V_{\phi}(\bar e_{{\phi}}(t_l^-))}{\lambda_{\min}(P_{\hat \phi})},$
    we have
$		\lvert P_{\phi} \rvert \lvert \bar e_{{\hat \phi}}(t_l^-) \rvert^2 + \lvert P_{\phi} \rvert \lvert \Phi_l\rvert^2 \leq 2\lvert P_{\phi} \rvert \frac{V_{\phi}(\bar e_{{\phi}}(t_l^-))}{\lambda_{\min}(P_{\hat \phi})} + \lvert P_{\phi} \rvert \lvert \Phi_l\rvert^2
		\leq 2 \frac{\lambda_{\max}(P_{\phi})}{\lambda_{\min}(P_{\hat \phi})} V_{\hat \phi}(\bar e_{{\hat \phi}}(t_l^-)) + \lambda_{\max}(P_{\phi}) \lvert \Phi_l\rvert^2 $
	which gives  
	\begin{align}\label{cond2}
		V_{\phi}(\bar e_{{\phi}}(t_l^+)) \leq \Omega_{\phi , \hat \phi} V_{\hat \phi}(\bar e_{{\hat \phi}}(t_l^-)) + \Theta,
	\end{align}
	where $\Omega_{\phi , \hat \phi} = 1 + 2 \frac{\lambda_{\max}(P_{\phi})}{\lambda_{\min}(P_{\hat \phi})}$ and $\Theta = \lambda_{\max}(P_{\phi})\lvert \Phi_l\rvert^2$. Since $\lvert \Phi_l \rvert^2$ is uniformly bounded, the term $\Theta$ is finite. Hence, the Lyapunov function at switching instants is upper bounded.
    
	From \eqref{cond1}, \eqref{cond2}, and invoking Theorem \ref{thmxue}, it follows that the origin of system \eqref{dyn_edge1} is uniformly practically stable if the switching $\sigma$ admits an average dwell-time that satisfies \eqref{cond}.
\end{proof}

\section{Numerical Example}\label{section6}
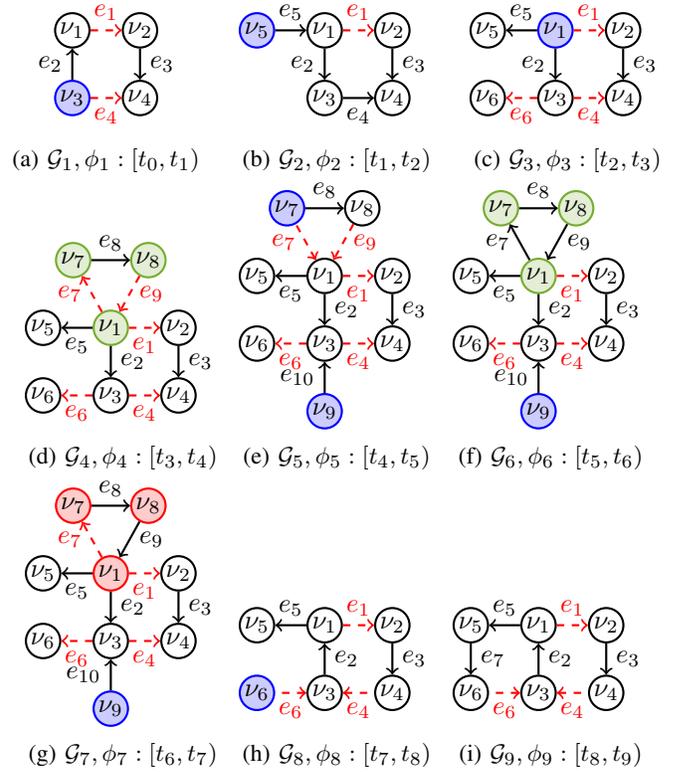
\begin{figure}[t!]\centering
\begin{subfigure}[b]{0.15\textwidth}\centering
\begin{tikzpicture}[node distance={9mm}, thick,main/.style = {draw, circle}] 
	\tikzset{mynode/.style={circle,draw,minimum size=13pt,inner sep=0pt,thick},}
		\node[mynode] (1) {$\nu_1$}; 
		\node[mynode] (2) [right  of=1] {$\nu_2$};
		\node[mynode, fill=blue, fill opacity=0.2, draw=blue, text opacity=1] (3) [below of=1] {$\nu_3$};
		\node[mynode] (4) [below of=2] {$\nu_4$};
		\draw[->, color=red, dash pattern=on 1mm off 1mm] (1) -- node[midway, above] {$e_1$}(2);
		\draw[<-] (1) -- node[midway, left] {$e_2$}(3);
		\draw[->] (2) -- node[midway, right] {$e_3$}(4);
		\draw[->, color=red, dash pattern=on 1mm off 1mm] (3) -- node[midway, below] {$e_4$}(4);
	\end{tikzpicture}
\caption{$\mathcal{G}_1, \phi_1:[t_0, t_1)$}
     \end{subfigure}
\hfill
\begin{subfigure}[b]{0.15\textwidth}
			\centering
\begin{tikzpicture}[node distance={9mm}, thick,main/.style = {draw, circle}] 
	\tikzset{mynode/.style={circle,draw,minimum size=13pt,inner sep=0pt,thick},}
		\node[mynode] (1) {$\nu_1$}; 
		\node[mynode] (2) [right  of=1] {$\nu_2$};
		\node[mynode] (3) [below of=1] {$\nu_3$};
		\node[mynode] (4) [below of=2] {$\nu_4$};
		\node[mynode, fill=blue, fill opacity=0.2, draw=blue, text opacity=1] (5) [left of=1] {$\nu_5$};
		\draw[->, color=red, dash pattern=on 1mm off 1mm] (1) -- node[midway, above] {$e_1$}(2);
		\draw[->] (1) -- node[midway, left] {$e_2$}(3);
		\draw[->] (2) -- node[midway, right] {$e_3$}(4);
		\draw[->] (3) -- node[midway, below] {$e_4$}(4);
		\draw[<-] (1) -- node[midway, above] {$e_5$}(5);
	\end{tikzpicture}
\caption{$\mathcal{G}_2, \phi_2:[t_1, t_2)$}
\end{subfigure}
\hfill
\begin{subfigure}[b]{0.15\textwidth}
	\centering
\begin{tikzpicture}[node distance={9mm}, thick,main/.style = {draw, circle}] 
	\tikzset{mynode/.style={circle,draw,minimum size=13pt,inner sep=0pt,thick},}
		\node[mynode, fill=blue, fill opacity=0.2, draw=blue, text opacity=1] (1) {$\nu_1$}; 
		\node[mynode] (2) [right  of=1] {$\nu_2$};
		\node[mynode] (3) [below of=1] {$\nu_3$};
		\node[mynode] (4) [below of=2] {$\nu_4$};
		\node[mynode] (5) [left of=1] {$\nu_5$};
		\node[mynode] (6) [left of=3] {$\nu_6$};
    \draw[red, dash pattern=on 1mm off 1mm][->](1) -- node[midway, above] {$e_1$}(2);
	\draw[->] (1) -- node[midway, left] {$e_2$}(3);
	\draw[->] (2) -- node[midway, right] {$e_3$}(4);
	\draw[->, color=red, dash pattern=on 1mm off 1mm] (3) -- node[midway, below] {$e_4$}(4);
	\draw[->] (1) -- node[midway, above] {$e_5$}(5);
	\draw[->, color=red, dash pattern=on 1mm off 1mm] (3) -- node[midway, below] {$e_6$}(6);
	\end{tikzpicture}
\caption{$\mathcal{G}_3, \phi_3:[t_2, t_3)$}
\end{subfigure}
\begin{subfigure}[b]{0.15\textwidth}
	\centering
\begin{tikzpicture}[node distance={9mm}, thick,main/.style = {draw, circle}] 
	\tikzset{mynode/.style={circle,draw,minimum size=13pt,inner sep=0pt,thick},}
    \definecolor{pastelgreen}{rgb}{0.4660 0.6740 0.1880}
	\node[mynode, fill=pastelgreen, fill opacity=0.2, draw=pastelgreen, text opacity=1] (1) {$\nu_1$}; 
	\node[mynode] (2) [right  of=1] {$\nu_2$};
	\node[mynode] (3) [below of=1] {$\nu_3$};
	\node[mynode] (4) [below of=2] {$\nu_4$};
	\node[mynode] (5) [left of=1] {$\nu_5$};
	\node[mynode] (6) [left of=3] {$\nu_6$};
    \node[mynode, fill=pastelgreen, fill opacity=0.2, draw=pastelgreen, text opacity=1] (7) [above of=1, xshift=-0.5cm] {$\nu_7$};
	\node[mynode, fill=pastelgreen, fill opacity=0.2, draw=pastelgreen, text opacity=1] (8) [above of=1, xshift=0.5cm] {$\nu_8$};
    \draw[red, dash pattern=on 1mm off 1mm][->](1) -- node[midway, below] {$e_1$}(2);
	\draw[->] (1) -- node[midway, right] {$e_2$}(3);
	\draw[->] (2) -- node[midway, right] {$e_3$}(4);
	\draw[->, color=red, dash pattern=on 1mm off 1mm] (3) -- node[midway, below] {$e_4$}(4);
	\draw[->] (1) -- node[midway, below] {$e_5$}(5);
	\draw[->, color=red, dash pattern=on 1mm off 1mm] (3) -- node[midway, below] {$e_6$}(6);
    \draw[->] (7) -- node[midway, above] {$e_8$}(8);
    \draw[->, color=red, dash pattern=on 1mm off 1mm] (8) -- node[midway, right] {$e_9$}(1);
    \draw[->, color=red, dash pattern=on 1mm off 1mm] (1) -- node[midway, left] {$e_7$}(7);
	\end{tikzpicture}
	\caption{$\mathcal{G}_4, \phi_4:[t_3, t_4)$}
\end{subfigure}
\begin{subfigure}[b]{0.15\textwidth}
	\centering
\begin{tikzpicture}[node distance={9mm}, thick,main/.style = {draw, circle}] 
	\tikzset{mynode/.style={circle,draw,minimum size=13pt,inner sep=0pt,thick},}
    \definecolor{pastelgreen}{rgb}{0.4660 0.6740 0.1880}
	\node[mynode] (1) {$\nu_1$}; 
	\node[mynode] (2) [right  of=1] {$\nu_2$};
	\node[mynode] (3) [below of=1] {$\nu_3$};
	\node[mynode] (4) [below of=2] {$\nu_4$};
	\node[mynode] (5) [left of=1] {$\nu_5$};
	\node[mynode] (6) [left of=3] {$\nu_6$};
    \node[mynode, fill=blue, fill opacity=0.2, draw=blue, text opacity=1] (7) [above of=1, xshift=-0.5cm] {$\nu_7$};
	\node[mynode] (8) [above of=1, xshift=0.5cm] {$\nu_8$};
	\node[mynode, fill=blue, fill opacity=0.2, draw=blue, text opacity=1] (9) [below of=3] {$\nu_9$};
    \draw[red, dash pattern=on 1mm off 1mm][->](1) -- node[midway, below] {$e_1$}(2);
	\draw[->] (1) -- node[midway, right] {$e_2$}(3);
	\draw[->] (2) -- node[midway, right] {$e_3$}(4);
	\draw[red, dash pattern=on 1mm off 1mm][->] (3) -- node[midway, below] {$e_4$}(4);
	\draw[->] (1) -- node[midway, below] {$e_5$}(5);
	\draw[->, color=red, dash pattern=on 1mm off 1mm] (3) -- node[midway, below] {$e_6$}(6);
    \draw[->] (7) -- node[midway, above] {$e_8$}(8);
    \draw[->, color=red, dash pattern=on 1mm off 1mm] (8) -- node[midway, right] {$e_9$}(1);
    \draw[<-, color=red, dash pattern=on 1mm off 1mm] (1) -- node[midway, left] {$e_7$}(7);
    \draw[->] (9) -- node[midway, left] {$e_{10}$}(3);
	\end{tikzpicture}
	\caption{$\mathcal{G}_5, \phi_5:[t_4, t_5)$}
\end{subfigure}
\begin{subfigure}[b]{0.15\textwidth}
	\centering
	\begin{tikzpicture}[node distance={9mm}, thick,main/.style = {draw, circle}] 
     \definecolor{pastelgreen}{rgb}{0.4660 0.6740 0.1880}
\tikzset{mynode/.style={circle,draw,minimum size=13pt,inner sep=0pt,thick},}
	\node[mynode, fill=pastelgreen, fill opacity=0.2, draw=pastelgreen, text opacity=1] (1) {$\nu_1$}; 
	\node[mynode] (2) [right  of=1] {$\nu_2$};
	\node[mynode] (3) [below of=1] {$\nu_3$};
	\node[mynode] (4) [below of=2] {$\nu_4$};
	\node[mynode] (5) [left of=1] {$\nu_5$};
	\node[mynode] (6) [left of=3] {$\nu_6$};
    \node[mynode, fill=pastelgreen, fill opacity=0.2, draw=pastelgreen, text opacity=1] (7) [above of=1, xshift=-0.5cm] {$\nu_7$};
	\node[mynode, fill=pastelgreen, fill opacity=0.2, draw=pastelgreen, text opacity=1] (8) [above of=1, xshift=0.5cm] {$\nu_8$}; %
	\node[mynode, fill=blue, fill opacity=0.2, draw=blue, text opacity=1] (9) [below of=3] {$\nu_9$};
    \draw[red, dash pattern=on 1mm off 1mm][->](1) -- node[midway, below] {$e_1$}(2);
	\draw[->] (1) -- node[midway, right] {$e_2$}(3);
	\draw[->] (2) -- node[midway, right] {$e_3$}(4);
	\draw[->, color=red, dash pattern=on 1mm off 1mm] (3) -- node[midway, below] {$e_4$}(4);
	\draw[->] (1) -- node[midway, below] {$e_5$}(5);
	\draw[->, color=red, dash pattern=on 1mm off 1mm] (3) -- node[midway, below] {$e_6$}(6);
    \draw[->] (7) -- node[midway, above] {$e_8$}(8);
    \draw[->] (8) -- node[midway, right] {$e_9$}(1);
    \draw[->] (1) -- node[midway, left] {$e_7$}(7);
    \draw[->] (9) -- node[midway, left] {$e_{10}$}(3);
	\end{tikzpicture}
	\caption{$\mathcal{G}_6, \phi_6:[t_5, t_6)$}
\end{subfigure}
\begin{subfigure}[b]{0.15\textwidth}
	\centering
	\begin{tikzpicture}[node distance={9mm}, thick,main/.style = {draw, circle}] 
\tikzset{mynode/.style={circle,draw,minimum size=13pt,inner sep=0pt,thick},}
	\node[mynode, fill=red, fill opacity=0.2, draw=red, text opacity=1] (1) {$\nu_1$}; 
	\node[mynode] (2) [right  of=1] {$\nu_2$};
	\node[mynode] (3) [below of=1] {$\nu_3$};
	\node[mynode] (4) [below of=2] {$\nu_4$};
	\node[mynode] (5) [left of=1] {$\nu_5$};
	\node[mynode] (6) [left of=3] {$\nu_6$};
    \node[mynode, fill=red, fill opacity=0.2, draw=red, text opacity=1] (7) [above of=1, xshift=-0.5cm] {$\nu_7$};
	\node[mynode, fill=red, fill opacity=0.2, draw=red, text opacity=1] (8) [above of=1, xshift=0.5cm] {$\nu_8$}; %
	\node[mynode, fill=blue, fill opacity=0.2, draw=blue, text opacity=1] (9) [below of=3] {$\nu_9$};
    \draw[red, dash pattern=on 1mm off 1mm][->](1) -- node[midway, below] {$e_1$}(2);
	\draw[->] (1) -- node[midway, right] {$e_2$}(3);
	\draw[->] (2) -- node[midway, right] {$e_3$}(4);
	\draw[->, color=red, dash pattern=on 1mm off 1mm] (3) -- node[midway, below] {$e_4$}(4);
	\draw[->] (1) -- node[midway, below] {$e_5$}(5);
	\draw[->, color=red, dash pattern=on 1mm off 1mm] (3) -- node[midway, below] {$e_6$}(6);
    \draw[->] (7) -- node[midway, above] {$e_8$}(8);
    \draw[->] (8) -- node[midway, right] {$e_9$}(1);
    \draw[->, color=red, dash pattern=on 1mm off 1mm] (1) -- node[midway, left] {$e_7$}(7);
    \draw[->] (9) -- node[midway, left] {$e_{10}$}(3);
	\end{tikzpicture}
	\caption{$\mathcal{G}_7, \phi_7:[t_6, t_7)$}
\end{subfigure}
\begin{subfigure}[b]{0.15\textwidth}
	\centering
	\begin{tikzpicture}[node distance={9mm}, thick,main/.style = {draw, circle}] 
    \tikzset{mynode/.style={circle,draw,minimum size=13pt,inner sep=0pt,thick},}
	\node[mynode] (1) {$\nu_1$}; 
	\node[mynode] (2) [right  of=1] {$\nu_2$};
	\node[mynode] (3) [below of=1] {$\nu_3$};
	\node[mynode] (4) [below of=2] {$\nu_4$};
	\node[mynode] (5) [left of=1] {$\nu_5$};
	\node[mynode, fill=blue, fill opacity=0.2, draw=blue, text opacity=1] (6) [left of=3] {$\nu_6$};
    \draw[red, dash pattern=on 1mm off 1mm][->](1) -- node[midway, above] {$e_1$}(2);
	\draw[<-] (1) -- node[midway, right] {$e_2$}(3);
	\draw[->] (2) -- node[midway, right] {$e_3$}(4);
	\draw[<-, color=red, dash pattern=on 1mm off 1mm] (3) -- node[midway, below] {$e_4$}(4);
	\draw[->] (1) -- node[midway, above] {$e_5$}(5);
	\draw[<-, color=red, dash pattern=on 1mm off 1mm] (3) -- node[midway, below] {$e_6$}(6);
	\end{tikzpicture}
	\caption{$\mathcal{G}_8, \phi_8:[t_7, t_8)$}
\end{subfigure}
\begin{subfigure}[b]{0.15\textwidth}
	\centering
	\begin{tikzpicture}[node distance={9mm}, thick,main/.style = {draw, circle}] 
    \tikzset{mynode/.style={circle,draw,minimum size=13pt,inner sep=0pt,thick},}
	\node[mynode] (1) {$\nu_1$}; 
	\node[mynode] (2) [right  of=1] {$\nu_2$};
	\node[mynode] (3) [below of=1] {$\nu_3$};
	\node[mynode] (4) [below of=2] {$\nu_4$};
	\node[mynode] (5) [left of=1] {$\nu_5$};
	\node[mynode] (6) [left of=3] {$\nu_6$};
    \draw[red, dash pattern=on 1mm off 1mm][->](1) -- node[midway, above] {$e_1$}(2);
	\draw[<-] (1) -- node[midway, right] {$e_2$}(3);
	\draw[->] (2) -- node[midway, right] {$e_3$}(4);
	\draw[<-, color=red, dash pattern=on 1mm off 1mm] (3) -- node[midway, below] {$e_4$}(4);
	\draw[->] (1) -- node[midway, above] {$e_5$}(5);
	\draw[<-, color=red, dash pattern=on 1mm off 1mm] (3) -- node[midway, below] {$e_6$}(6);
    \draw[<-] (6) -- node[midway, right] {$e_7$}(5);
	\end{tikzpicture}
	\caption{$\mathcal{G}_9, \phi_9:[t_8, t_9)$}
\end{subfigure}
  \caption{The black edges represent cooperative interactions, and the dashed red edges represent antagonistic interactions.} 
  \label{graph}
\end{figure}
We illustrate our theoretical findings by considering a signed OMAS evolving according to \eqref{FO}--\eqref{CL}, initially represented by a SB signed digraph with four agents, where $\mathcal{V}_1 = \{\nu_1, \nu_3\},\ \mathcal{V}_2 = \{\nu_2, \nu_4\}$. Over time, agents are added or removed, the signs of interconnections switch between cooperation and antagonism making the signed digraph SUB, and agents change roles, from being followers to leaders, as illustrated in Figure \ref{graph}. Let $k_1=4$ and $P_\phi$ be generated by \eqref{Lyap-eq} or \eqref{Lyap_eq_dir} for each mode $\phi \in \mathcal{P}$, depending on the considered graph structure, with $Q_{\phi} = I_{M_{\phi} \times M_{\phi}}$, $\alpha_{\phi_j}=1$, $j\leq 7$, $\alpha_{\phi_8}=0.2$ and $\alpha_{\phi_9}=0.4$. The initial conditions of the agents are $[3.5,\ 4,\ -2,\ -6.5,\ 5.5,\ -10.5,\ 3.5,\ 12,\ 5.5]$. Lastly, the switching times in seconds are \( t_1 = 1.5 \), \( t_2 = 3 \), \( t_3 = 4.5 \), \( t_4 = 6.8 \), \( t_5 = 9 \), \( t_6 = 12 \), \( t_7 = 17 \), and \( t_8 = 24 \), which satisfy \eqref{cond}.
\begin{figure*}[h!]
\centering
\subfloat[]{
\includegraphics[width=0.85\columnwidth]{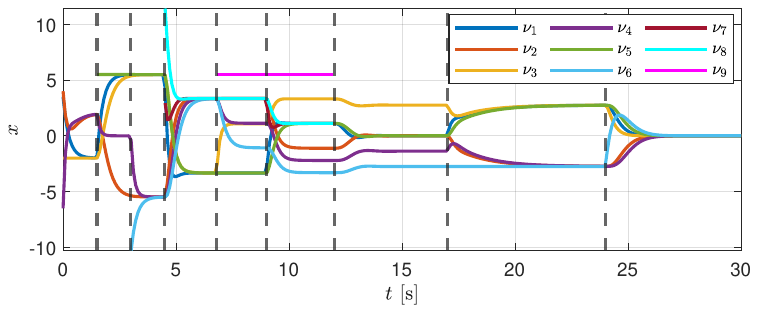}\label{fig2}}\quad
\subfloat[]{
\includegraphics[width=0.85\columnwidth]{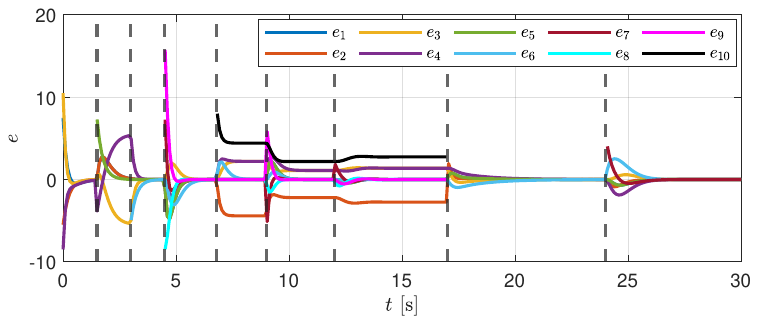}
\label{fig3}}\\[5pt]
\subfloat[]{\includegraphics[width=0.85\columnwidth]{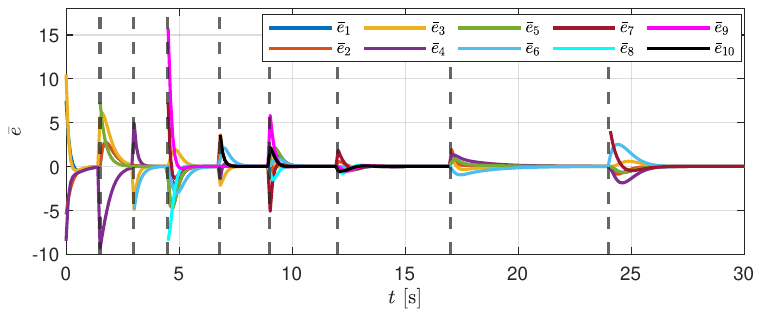}\label{fig4}}
\quad
\subfloat[]{\includegraphics[width=0.85\columnwidth]{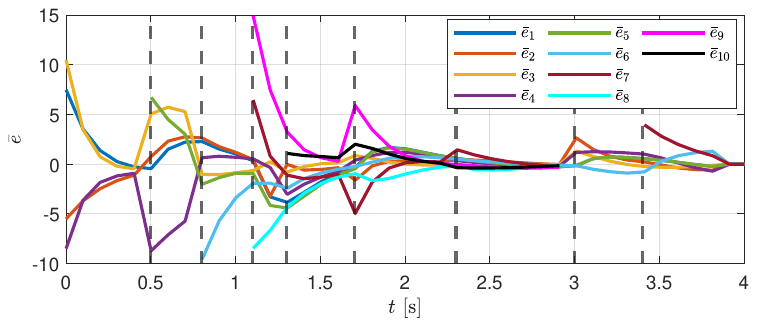}\label{fig4_1}}
\caption{ {\small Evolution of the trajectories. (a): Agents \eqref{FO}-\eqref{CL}. (b): Edges \eqref{edgestates}. (c): Synchronization errors  \eqref{dyn_edge1} when $\tau$ satisfies \eqref{cond}. (d): synchronization errors \eqref{dyn_edge1} when $\tau$ does not satisfy \eqref{cond}.}} 
\end{figure*}
The evolution of the agents' state trajectories is shown in Figure \ref{fig2}, the edge states in Figure \ref{fig3}, and the synchronization errors in Figure \ref{fig4}. 
The peaks observed in Figure \ref{fig3} correspond to the addition/removal of edges and changes in edge signs or directions. Unlike the results in \cite{CDC2025OMAS}, which consider only undirected signed graphs where edge states always converge to zero, here it is evident that the edge states do not always converge to zero. For example, when the resulting graph contains multiple leader groups, \textit{i.e.,} when the system admits more than two equilibria, the edge states remain nonzero. Finally, Figure \ref{fig4} shows that the synchronization errors converge very close to zero, when \eqref{cond} is satisfied. In contrast, in Figure \ref{fig4_1}, when $\tau$ fails to satisfy \eqref{cond}, the synchronization errors deviate noticeably from zero.

\section{Conclusion}\label{section7}
In this paper, we addressed the synchronization problem of signed OMAS, where both cooperative and antagonistic interactions are present, and which potentially contain one or multiple leader groups. 
We extended the Lyapunov equation-based characterization of the Laplacians of all-cooperative networks with one zero eigenvalue to signed edge Laplacians with multiple zero eigenvalues. 
Finally, we analyzed the stability properties of the signed OMAS and investigated the different synchronization scenarios, from standard consensus to more complex scenarios of bipartite containment. Further research aims to extend these results to the case of constrained systems and discrete-time settings.

\section*{References}
\bibliographystyle{IEEEtran}
\bibliography{references_pelin}

\end{document}